\definecolor{blueviolet}{rgb}{0.2, 0.2, 0.6}
\definecolor{webgreen}{rgb}{0,.5,0}
\definecolor{webbrown}{rgb}{.6,0,0}
\newcommand{\dtr}{d_{\mathrm{tr}}}
\newtheorem{lemma}{Lemma}
\newtheorem{definition}{Definition}
\newtheorem{remark}{Remark}
\begin{document}

\title{Learning to erase quantum states: \\
thermodynamic implications of quantum learning theory}

\author{Haimeng Zhao}
\email{haimengzhao@icloud.com}
\affiliation{\mbox{Institute for Quantum Information and Matter, California Institute of Technology, Pasadena, CA 91125, USA}}

\author{Yuzhen Zhang}
\email{yuzhenzhang@ucsb.edu}
\affiliation{Department of Physics, University of California, Santa Barbara, CA 93106, USA}

\author{John Preskill}
\email{preskill@caltech.edu}
\affiliation{\mbox{Institute for Quantum Information and Matter, California Institute of Technology, Pasadena, CA 91125, USA}}
\affiliation{AWS Center for Quantum Computing, Pasadena, CA 91125, USA}

\begin{abstract}
The energy cost of erasing quantum states depends on our knowledge of the states.
We show that learning algorithms can acquire such knowledge to erase many copies of an unknown state at the optimal energy cost.
This is proved by showing that learning can be made fully reversible and has no fundamental energy cost itself.
With simple counting arguments, we relate the energy cost of erasing quantum states to their complexity, entanglement, and magic.
We further show that the constructed erasure protocol is computationally efficient when learning is efficient.
Conversely, under standard cryptographic assumptions, we prove that the optimal energy cost cannot be achieved efficiently in general.
These results also enable efficient work extraction based on learning.
Together, our results establish a concrete connection between quantum learning theory and thermodynamics, highlighting the physical significance of learning processes and enabling provably-efficient learning-based protocols for thermodynamic tasks.
\end{abstract}

\maketitle

Do abstract learning processes have tangible physical consequences?
For example, does the (in)ability to learn impact the amount of physical resources required to perform certain tasks?

Learning is the process of acquiring information, which can be used to execute actions \cite{jaynesInformationTheoryStatistical1957a,jaynesInformationTheoryStatistical1957,jaynesProbabilityTheoryLogic2003,landauerInformationPhysical1991}.
Landauer's principle \cite{landauerIrreversibilityHeatGeneration1961,bennettLogicalReversibilityComputation1973,bennettThermodynamicsComputationReview1982,reebImprovedLandauerPrinciple2014,dahlstenInadequacyNeumannEntropy2011,feynmanFeynmanLecturesComputation2023} illustrates how the energy cost of erasing a physical system depends on our knowledge of the system.
In particular, given a system $S$ in a state $\rho$ with degenerate Hamiltonian $\mathcal{H}=0$, the amount of work required to transform it to a reference pure state $\ket{0}$ in an environment at temperature $T$ is $W\geq H(S)k_B T\ln 2$, where $k_B$ is the Boltzmann constant and $H(S)$ is the entropy of the state.
Entropy measures our ignorance of the state, and additional knowledge of the system reduces the optimal work cost to $H(S|M)k_B T \ln 2$, where $H(S|M)$ is the entropy of the system conditioned on the memory $M$ that records the knowledge \cite{rioThermodynamicMeaningNegative2011}. 
However, this statement does not take into account the potential cost of acquiring the knowledge recorded in the memory.

To formalize this process of ``learning to erase'', we consider the simple scenario in which a source repeatedly produces an unknown $n$-qubit state $\ket{\psi_x}$ from a class of $m$ possible states $\mathcal{C}=\{\ket{\psi_x}\}_{x=1}^m$.
At first, due to our ignorance of the state, we have to pay work to erase it.
As we collect more copies of the state $\ket{\psi_x}$, we progressively learn enough information to identify the state.
Once we have learned the state, we can erase additional copies without further work by reversing the state preparation unitary, $U_x^\dagger\ket{\psi_x}=\ket{0}$ 
\footnote{
By learning, we mean taking multiple copies of $\ket{\psi_x}$ as input and outputting a circuit description of $U_x$, the unitary that prepares $\ket{\psi_x}$. 
The state unpreparation unitary $U_x^\dagger$ can be easily derived from the circuit description of $U_x$.
This is very different from constructing $U_x^\dagger$ from quantum queries to $U_x$, which is generally hard \cite{yoshida2023reversing}.
See \Cref{sec:learning} for a formal definition.
}

In this paper, we rigorously study this process using tools recently developed in quantum learning theory \cite{zhaoLearningQuantumStates2024,huangLearningShallowQuantum2024,landauLearningQuantumStates2024,leoneLearningTdopedStabilizer2024,grewalEfficientLearningQuantum2024,cramerEfficientQuantumState2010,arunachalamOptimalAlgorithmsLearning2023,schuster2025random,anshuSurveyComplexityLearning2024} and explore its thermodynamic implications.
We explain a general method that lifts any learning algorithm to a reversible one, enabling it to acquire knowledge from multiple copies of the unknown state and erase many additional copies at the optimal energy cost, saturating Landauer's limit (Figure \ref{fig:overview}(a)).
We further pinpoint the quantitative relation between the energy cost of erasing a class of quantum states and their complexity, as measured by circuit depth, entanglement, or magic (Figure \ref{fig:overview}(b)).

After establishing the information-theoretical optimality of learning-to-erase protocols, we study their computational efficiency --- whether they can be implemented in a number of elementary operations that scales polynomially with the system size $n$ and number of copies $N$.
Computational efficiency is crucial in that if a protocol requires exponential time to execute, it would take longer than the age of the universe to erase only a few hundreds of qubits.
We show that whenever the learning algorithm and state preparation are efficient, the constructed erasure protocol is also efficient.
We illustrate this with a wide range of physically relevant classes of states, including shallow-circuit states, doped stabilizer states, matrix product states, and low-degree phase states.
Conversely, under standard cryptographic assumptions, we prove that there exist classes of low-complexity states that are hard to learn and also hard to erase. 
For these states, Landauer's principle allows $N$ copies to be erased by expending an amount of work independent of $N$, but for any efficient erasure protocol the cost scales nearly linearly with $N$.
This huge gap between the work cost of computationally-bounded agents versus what is information-theoretically possible is a genuine quantum many-body phenomenon and a much stronger no-go result than the third law of thermodynamics \cite{wilmingThirdLawThermodynamics2017,masanesGeneralDerivationQuantification2017,scharlauQuantumHornsLemma2018,tarantoLandauerNernstWhat2023}.
It also implies that the existing general purpose compress-to-erase methods for achieving Landauer's limit \cite{rioThermodynamicMeaningNegative2011,pleschEfficientCompressionQuantum2010,yangEfficientQuantumCompression2016,hayashiQuantumInformationTheory2017,bennett2006universal} require resources superpolynomial in $n$
\footnote{Note that the time efficiency claims in these previous works refer to a runtime of $\poly(N, D)$, where $D=2^n$ is the Hilbert space dimension of a single copy that grows exponentially with the system size.}.

Finally, we show that our learning-to-erase protocols can be used to extract work \cite{bennettThermodynamicsComputationReview1982,alickiThermodynamicsQuantumInformational2004,faistMinimalWorkCost2015} in Methods, where similar optimality and (in)efficiency results hold.
In particular, the no-go result implies that in general one can extract only a small fraction of all the possible work in polynomial time.
In contrast, given efficiently learnable states, we can efficiently extract the optimal amount of work by learning.

\begin{figure*}[ht]
    \centering
    \includegraphics[width=\linewidth]{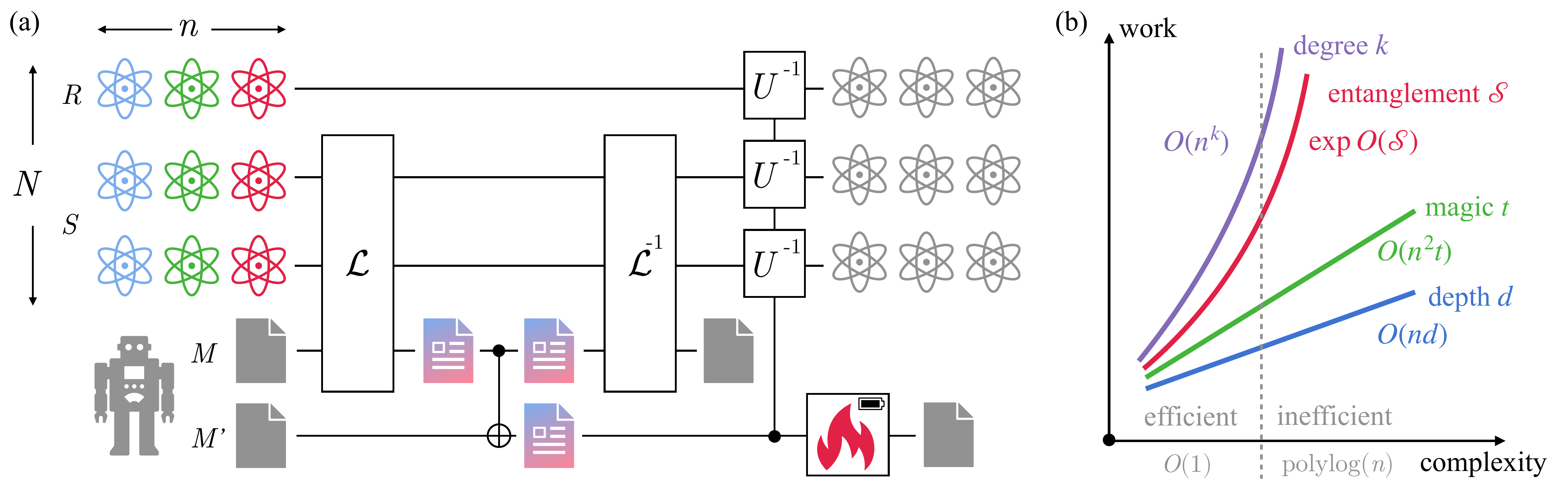}
    \caption{(a) Reversible learning algorithm $\mathcal{L}$ can acquire knowledge of the unknown $n$-qubit state and erase $N$ copies of it at the optimal work cost, saturating Landauer's limit.
    Here, $S$ stores the copies used by the learning algorithm, $R$ stores the rest, $M$ is the memory of the learning algorithm, and $M'$ is an auxiliary memory.
    (b) The work cost of erasing physically relevant classes of states grows with the complexity of the states, as measured by circuit depth $d$, magic $t$, entanglement entropy $\mathcal{S}$, and degree $k$. When the complexity is bounded by a constant, Landauer's limit can be achieved efficiently by learning. In contrast, when the complexity grows poly-logarithmically, no polynomial-time quantum algorithm can erase the states without paying a nearly maximal amount of work. 
    }
    \label{fig:overview}
\end{figure*}

\section{General Formulation}
We first consider the general case where the class of possible states $\mathcal{C}=\{\ket{\psi_x}\}_{x=1}^m$ is arbitrary and known (along with circuit descriptions of their state preparation unitaries $\{U_x\}_{x=1}^m$).
We assume that these states are distinct, in the sense that they are pairwise apart by a constant trace distance: $\forall x\neq x', \dtr(\ket{\psi_x}, \ket{\psi_{x'}})=\frac12 \|\ket{\psi_x}\bra{\psi_x} - \ket{\psi_{x'}}\bra{\psi_{x'}}\|_1\geq \epsilon = \Theta(1)$ \footnote{In this work, we use the standard notations for asymptotics. For two positive functions $f(n)$ and $g(n)$, $f(n) = O(g(n))$ if $\exists n_0, C>0$ such that $\forall n>n_0, f(n)\leq Cg(n)$. $f(n)=\Omega(g(n))$ if $g(n)=O(f(n))$. $f(n)=\Theta(g(n))$ if $f(n)=O(g(n))$ and $f(n)=\Omega(g(n))$.}.
The state we aim to erase can be described by $\rho = \sum_{x=1}^m p_x (\ket{\psi_x}\bra{\psi_x})^{\otimes N}$, where $\{p_x\}$ are the prior probabilities. 
We consider the state-agnostic setting, where we do not have knowledge of $\{p_x\}$ and thus the state $\rho$ that we want to erase.
For example, the distribution $\{p_x\}$ can be supported on any single state $\ket{\psi_x}$, or it can be uniform.
Our goal is to design a protocol that can erase $\rho$ with high success probability no matter what the distribution $\{p_x\}$ is.
We are allowed to use ancilla qubits as long as they are restored to the initial state, so that one cannot hide entropy or work cost in the ancilla qubits \cite{faistMinimalWorkCost2015}.
To do so, we describe a learning algorithm that identifies $x$, lift it to a reversible algorithm, and then use it to erase the state.
Throughout this work, we define learning algorithms as quantum algorithms that take multiple copies of an unknown state as input and output an approximate circuit description of that state with high probability.
See \Cref{sec:prelim} for a formal description.

We adopt a learning algorithm \cite{zhaoLearningQuantumStates2024} based on hypothesis selection \cite{badescuImprovedQuantumData2024} and Clifford classical shadows \cite{huangPredictingManyProperties2020}.
Given $s$ copies of the state $\ket{\psi_x}$, we apply a random $n$-qubit Clifford gate on each copy and measure in the computational basis.
The measurement statistics allow us to classically calculate unbiased estimators for the expectation values of $\binom{m}{2} = O(m^2)$ Helstrom measurements \cite{helstromQuantumDetectionEstimation1969} between all pairs of possible states in $\mathcal{C}$.
We can then use quantum hypothesis selection to find out the candidate state in $\mathcal{C}$ that is closest to $\ket{\psi_x}$ in trace distance.
When we have $s=O(\log (m)/\epsilon^2)=O(\log m)$ samples, we are guaranteed to find the correct $x$ with high probability $p_{\mathrm{succ}}$.
A detailed proof can be found in \cite[Proposition 1]{zhaoLearningQuantumStates2024}.

The above learning algorithm involves randomness generation and irreversible measurements that may incur extra work cost.
To avoid this, we describe a general method that lifts a learning algorithm to a reversible one.
Formally speaking, a learning algorithm $\mathcal{L}_0$ takes the system $\ket{\psi_x}^{\otimes s}_S$ and a classical memory register $\ket{0}_M$ as input and writes the learning outcome into $\ket{x}_M$ with high probability.
It is implemented using a set of quantum and classical gates, random number generators, and projective measurements.
We first replace every quantum and classical gates by their reversible counterparts \cite{bennettThermodynamicsComputationReview1982,fredkinConservativeLogic1982}.
Then we replace projective measurements by their coherent version: to measure a set of projectors $\{P_a\}$, we apply an unitary $U$ on the system and an ancilla $A$ that records the measurement outcome.
$U$ completes the isometry $\sum_a P_a \otimes \ket{a}\bra{0}$ such that $ U\ket{\psi}_{S}\ket{0}_A=\sum_a P_a\ket{\psi}_{S}\ket{a}_A$ for any $\ket{\psi}_S$. 
When classical random numbers are needed, we coherently measure $\ket{+}$ states in the computational basis.
All operations that depend on random numbers or measurement outcomes can be realized using reversible gates controlled by the ancilla.
In this way, we have constructed a reversible learning algorithm $\mathcal{L}$ (a unitary) that satisfies
\begin{equation}
    \mathcal{L}\ket{\psi_x}^{\otimes s}_S\ket{0}_M\ket{0}_A=\sum_{x'=1}^m c_{x'|x}\ket{x'}_M\ket{\mathrm{junk}_{x'}}_{S,A},
\end{equation}
where the coefficients $c_{x'|x} \in\mathbb{C}$ are determined by the probability of predicting $x'$ given the state $\ket{\psi_x}$.
In particular, $|c_{x|x}|^2\geq p_{\mathrm{succ}}$ is close to one for any $x$.

Now we construct an erasure protocol $\mathcal{E}(\mathcal{L})$ using the reversible learning algorithm $\mathcal{L}$.
The construction is summarized in Figure \ref{fig:overview}(a).
Given $N\geq s$ copies of $\ket{\psi_x}$, the protocol proceeds by first executing $\mathcal{L}$ on $s$ copies.
We call the $s$ copies $S$ and the rest $R$.
Then we introduce an additional classical memory $M'$ and use CNOT gates to copy the content of $M$ into $M'$.
Next, we apply the inverse $\mathcal{L}^{-1}$ to $S, M, A$ to uncompute \cite{nielsenQuantumComputationQuantum2010} and get rid of the junk. 
We erase each copy in $S$ and $R$ by applying the inverse state preparation unitary conditioned on the learning outcome recorded in $M'$.
Finally, we erase $M'$ bit by bit using standard classical Landauer erasure \cite{bennettThermodynamicsComputationReview1982}, costing energy $W=(\log_2 m)k_BT\ln 2$.
We prove the correctness of this erasure protocol in \Cref{sec:correctness} and provide a proof sketch in Methods.

We note that the only step that costs energy in this erasure protocol is the step of erasing learning outcomes, which costs $W=(\log_2 m)k_BT\ln 2$ joules of work, independent of $N$.
This confirms our intuition that once we have learned the state, we can erase many more copies without further work.
This general method of constructing reversible learning algorithms illustrates that learning as a physical process does not itself have a fundamental energy cost \cite{vsafranek2023work,xuereb2024resources}.
In principle, the energy cost only occurs when the learning agent needs to erase and recycle its memory.
In practice, there may be an extra energy cost if the learning algorithm is not implemented in a fully coherent and reversible way.

The sample complexity $s$ of the learning algorithm plays an interesting role in the erasure protocol.
The above construction works as long as $N\geq s$ and we have sufficient quantum working memory (ancilla supply) so that we have enough information to learn and space to make everything reversible.
Extra work may be needed when these assumptions are not met.
This shows that the usual $N\to \infty$ limit taken in thermodynamics is not necessary and can be significantly relaxed to $N\geq s$.
Assuming that the learning algorithm and the state preparation unitary have time complexity $T_{\mathrm{learn}}$ and $T_{\mathrm{prep}}$, the total time complexity of $\mathcal{E}(\mathcal{L})$ is $O(T_{\mathrm{learn}}+\log m+NT_{\mathrm{prep}})$.
Therefore, whenever the learning and state preparation steps are efficient, the erasure protocol is also efficient.

We show that the above work cost is optimal by calculating the lower bound given by Landauer's principle.
Since we are only given one copy of $\rho$ (which contains $N$ copies of $\ket{\psi_x}$), the Landauer's limit is given by the one-shot max-entropy $W\geq H_{\mathrm{max}}(\rho)k_BT\ln 2$, where $H_{\mathrm{max}}(\rho) = \log_2 (\mathrm{rank}(\rho))$ is the max-entropy of $\rho$ \cite{dahlstenInadequacyNeumannEntropy2011,rioThermodynamicMeaningNegative2011,faistMinimalWorkCost2015,rennerSmoothRenyiEntropy2004,konigOperationalMeaningMin2009}. 
To calculate the rank of $\rho$, we construct the Gram matrix $G_{xx'}=\braket{\psi_x}{\psi_{x'}}^N$.
Assuming $N> 2\log(m-1)/\log(1/(1-\epsilon^2))=\Theta(\log m)$ (which is naturally satisfied when $N\geq s$ and we always assume this from now on), we have $\sum_{x'\neq x} |G_{xx'}| \leq \sum_{x'\neq x}(1-\epsilon^2)^{N/2}< \sum_{x'\neq x}(m-1)^{-1} =1 =G_{xx}, \forall x$.
This means that $G$ is diagonally dominant and thus non-singular, implying that $\{\ket{\psi_x}^{\otimes N}\}_{x=1}^m$ are linearly independent.
Therefore, when we maximize over all possible $\{p_x\}$ in the state-agnostic setting, we have $\mathrm{rank}(\rho) = m$ and Landauer's limit reads $(\log_2 m) k_BT\ln 2$.
This coincides with the work cost of $\mathcal{E}(\mathcal{L})$, proving that the learning-to-erase protocol is information-theoretically optimal.

We remark that the above construction works for any valid learning algorithm, not just for the one described above.
In fact, a simple compress-to-erase protocol commonly used in the literature \cite{dahlstenInadequacyNeumannEntropy2011,rioThermodynamicMeaningNegative2011} can already achieve the optimal energy cost (see \Cref{sec:prelim}).
However, this compress-to-erase protocol has exponential runtime, and in fact we will show that this computational inefficiency is inevitable.
In contrast, we will see below that learning-to-erase protocols are provably-efficient and energy-optimal for physically relevant classes of states.
Moreover, in regimes where our protocols are inefficient, our computational hardness results show that no efficient protocols can achieve the optimal energy cost.

\section{Work and complexity}
We now apply the general formulation to several physically relevant classes of states, and illustrate how the work cost of erasure grows with the complexity of the states in the target class.
For these classes, we give time efficient erasure protocols built from efficient learning algorithms.
The results are summarized in Figure \ref{fig:overview}(b).

Our first example is shallow-circuit states $\mathcal{C}=\{\ket{\psi} = U\ket{0^n}\}$, where $U$ can be an arbitrary quantum circuit with depth $d$ on a $O(1)$-dimensional lattice built from a finite-size two-qubit gate set $G$ \cite{huangLearningShallowQuantum2024,landauLearningQuantumStates2024,kimLearningStatePreparation2024}.
This class naturally describes the output state of finite-time evolution of physical systems \cite{poulinQuantumSimulationTimedependent2011} and noisy intermediate-scale quantum computation \cite{preskillQuantumComputingNISQ2018}, offering quantum advantage over classical computation \cite{terhalAdaptiveQuantumComputation2004,gaoQuantumSupremacySimulating2017,giovannettiArchitecturesQuantumRandom2008,haferkampClosingGapsQuantum2020,hangleiterComputationalAdvantageQuantum2023,bravyiQuantumAdvantageShallow2018}.
The number of possible states in this class is $m=|G|^{\Theta(nd)}$.
Thus the required work cost is 
\begin{equation}
    W=\Theta(nd)k_BT\ln 2,
\end{equation}
growing linearly with circuit depth $d$, when $N$ is sufficiently large.
In Methods, we generalize this to the case where each gate is non-discrete and can be any two-qubit unitary.
The same work cost scaling applies up to logarithmic factors.
For constant $d$, this optimal work cost is achieved efficiently using $\mathcal{E}(\mathcal{L})$ with a polynomial-time learning algorithm $\mathcal{L}$ that learns the local inversions of the shallow circuit and sews them together (the discrete gate set version of \cite{landauLearningQuantumStates2024}).
In particular, we have $T_{\mathrm{learn}}=\poly(n)\cdot 2^{d^{O(1)}}$ and $T_{\mathrm{prep}}=O(nd)$, and the total time complexity of erasure is $O(\poly(n)\cdot 2^{d^{O(1)}}+ndN)$.

A second example is $t$-doped stabilizer states $\mathcal{C}=\{\ket{\psi} = U\ket{0^n}\}$, where $U$ is composed of an arbitrary number of Clifford gates and at most $t$ $T$-gates ($T=\mathrm{diag}(1, e^{i\pi/4})$) \cite{grewalEfficientLearningQuantum2024,leoneLearningTdopedStabilizer2024}.
Here, $t$ characterizes the amount of magic (or non-stabilizerness) in the state \cite{liuManyBodyQuantumMagic2022}, which is closely related to universality \cite{shiBothToffoliControlledNOT2002,bravyiUniversalQuantumComputation2005}, classical simulability \cite{aaronsonImprovedSimulationStabilizer2008,bravyiImprovedClassicalSimulation2016,bravyiSimulationQuantumCircuits2019}, and error correction \cite{knillFaultTolerantPostselectedQuantum2004,knillTheoryQuantumError2000} of quantum computation.
The number of possible states in the class is $m=(2^{\Theta(n^2)})^{t+1} \times n^t = \exp\Theta(n^2(t+1))$ \cite{aaronsonImprovedSimulationStabilizer2008} because there are 
$2^{\Theta(n^2)}$ possible Clifford unitaries in between each pair of $T$-gates.
When $N$ is large, the work cost for erasure is 
\begin{equation}
    W=\Theta(n^2(t+1))k_BT\ln 2,
\end{equation} 
growing linearly with the magic $t$.
In particular, the work cost for erasing stabilizer states is $\Theta(n^2)k_B T \ln 2$.
For $t=O(\log n)$, this is achieved efficiently by an efficient learning algorithm for $t$-doped stabilizer states, which identifies the state exactly by bell sampling \cite{montanaroLearningStabilizerStates2017} and estimating the stabilizer group (see \cite{leoneLearningTdopedStabilizer2024}).
We have $T_{\mathrm{learn}}=\poly(n, 2^t)$ and $T_{\mathrm{prep}}=O(n^2t)$; thus the total time complexity of erasure is $O(\poly(n, 2^t)+n^2tN)$.

Our third example is matrix product states (MPS) $\mathcal{C}=\{\ket{\psi} = \sum_{x\in \{0, 1\}^n} A_1^{x_1}\cdots A_n^{x_n} \ket{x}\}$, where $A_i^x\in \mathbb{C}^{\chi_i\times\chi_{i+1}}$ with bond dimension $\chi_1=\chi_{n+1}=1$ and $\chi_i\leq 2^{\mathcal{S}}$ \cite{cramerEfficientQuantumState2010}.
Here, $\mathcal{S}$ is the maximal entanglement entropy across every cut of the system.
MPS representations with constant $\mathcal{S}$ exist for states with bounded entanglement, including ground states of one-dimensional gapped local Hamiltonians \cite{hastingsAreaLawOne2018} and symmetry-protected topological phases \cite{zengQuantumInformationMeets2019}, as well as important states like AKLT \cite{affleckRigorousResultsValencebond1987}, GHZ, W and cluster states that are useful for quantum metrology \cite{degenQuantumSensing2017} and universal quantum computation \cite{jozsaIntroductionMeasurementBased2005}.
In Methods, we show that the work cost for erasure 
\begin{equation}
    W=\mathrm{exp}(\tilde{\Theta}(\mathcal{S}))k_BT\ln 2
\end{equation} 
grows exponentially with the entanglement entropy $\mathcal{S}$.
For constant $\mathcal{S}$, this is achieved efficiently using an efficient learning algorithm for MPS that sequentially learns $n$ $O(\mathcal{S})$-qubit unitaries to disentangle the MPS (see \cite{cramerEfficientQuantumState2010}).
We have $T_{\mathrm{learn}}=\poly(n, 2^{\mathcal{S}})$ and $T_{\mathrm{prep}}=O(n4^{\mathcal{S}})$, and the total time complexity of erasure is $O(\poly(n, 2^{\mathcal{S}})+n4^\mathcal{S}N)$.

The last example is low-degree phase states $\mathcal{C} = \{\ket{\psi} = 2^{-n/2}\sum_{x\in \{0, 1\}^n}(-1)^{f(x)}\ket{x}\}$, where $f: \{0, 1\}^n\to \{0, 1\}$ is an arbitrary Boolean function with degree at most $k=O(1)$ \cite{arunachalamOptimalAlgorithmsLearning2023}.
They correspond to hypergraph states that are generated by $k$-body multi-controlled Z gates on $\ket{+}^{\otimes n}$.
The degree $k$ is closely related to the level of Clifford hierarchy \cite{hu2021climbing} and the amount of quantum contextuality \cite{raussendorf2013contextuality}.
They are useful resources in measurement-based quantum computation \cite{raussendorfOneWayQuantumComputer2001,jozsaIntroductionMeasurementBased2005} and appear in quantum advantage experiments \cite{bremnerClassicalSimulationCommuting2010,bremnerAchievingQuantumSupremacy2017} and oracle query algorithms \cite{nielsenQuantumComputationQuantum2010}. By counting polynomial Boolean functions, we see that the number of such states is $m=\prod_{j=0}^k2^{\binom{n}{j}}=\exp \Theta(n^k)$ and the required work cost is 
\begin{equation}
    W=\Theta(n^k)k_BT\ln 2,
\end{equation}
growing exponentially with degree $k$.
For constant $k$, the erasure is efficient using an efficient learning algorithm that measures single-qubit $X$ and $Z$ observables and learns the gradient of $f$ with respect to each argument (see \cite[Theorem 3]{arunachalamOptimalAlgorithmsLearning2023}).
Here, $T_{\mathrm{learn}}=O(n^{3k-2})$ and $T_{\mathrm{prep}}=O(kn^k)$ \cite{gidneyConstructingLargeControlled2015}, and the total time complexity of erasure is $O(n^{3k-2}+k n^k N)$.

In Figure \ref{fig:overview}(b), we summarize the above results and visualize how the optimal work cost of erasure grows with the four different complexity measures: circuit depth $d$, magic $t$, entanglement entropy $\mathcal{S}$, and degree $k$.
In particular, the work cost scales linearly with depth ($\sim nd$) and magic ($\sim n^2t$), where the growth with magic has a larger slope.
Meanwhile, the work cost scales exponentially with entanglement entropy ($\sim e^\mathcal{S}$) and degree ($\sim n^k$), where the growth with degree is faster.
The computational complexity of erasure grows exponentially with all four measures of complexity. 
When the complexity is bounded by a constant, the optimal work cost can be achieved efficiently by learning.

\section{Computational Hardness}
For the above special classes of states, we have constructed efficient erasure protocols with optimal work cost.
However, we show that this is not possible in general.
We consider a particular class of states, pseudorandom states $\{\ket{\psi_x}\}$ \cite{jiPseudorandomQuantumStates2018,maHowConstructRandom2024}, which cannot be efficiently distinguished from Haar random states with non-negligible ($1/\poly(n)$) probability.
Concretely, under the standard cryptographic conjecture that there exist one-way functions secure against any sub-exponential time quantum adversary (e.g., based on the learning with errors problems \cite{regevLatticesLearningErrors2009}), pseudorandom states can be constructed with $d=\polylog(n)$ circuit depth \cite{schuster2025random}.
Let $N=\poly(n)$.
Because the entropy of $\rho=\sum_x p_x(\ket{\psi_x}\bra{\psi_x})^{\otimes N}$ depends on $m$ but not $N$, Landauer's principle asserts that the $N$ copies of pseudorandom states can be erased with low work cost $\Theta(nd)k_BT\ln 2=\Theta(n\polylog(n))k_BT\ln 2$, independent of $N$. 
On the other hand, for the Haar ensemble, the Landauer cost of erasure is 
\begin{equation}
    W_{\mathrm{Haar}}= \log_2\binom{N+2^n-1}{N}k_BT\ln 2,
\end{equation}
because the density operator realized by the Haar ensemble has full rank over the totally symmetric subspace for $N$ copies of a $2^n$-dimensional Hilbert space, which has dimension $\binom{N+2^n-1}{N}$.
In the following, we prove that any polynomial-time quantum algorithm that can erase a pseudorandom state $\rho=\sum_x p_x(\ket{\psi_x}\bra{\psi_x})^{\otimes N}$ with success probability close to one must require work cost $W_{\mathrm{Haar}}$.
The intuition is simple: if we can erase pseudorandom states with less work than $W_{\mathrm{Haar}}$, then we can distinguish them from Haar random states by measuring the work cost, violating the definition of pseudorandom states.

To prove the statement, we suppose for the sake of contradiction that there exists an efficient erasure protocol $\mathcal{E}$ that erases $\rho$ with work cost less than $W_{\mathrm{Haar}}$.
We construct a distinguisher $\mathcal{D}$ that distinguishes pseudorandom states from Haar random states.
The distinguisher proceeds by executing $\mathcal{E}$ on the given states and conducting two tests: (1) determine if the erasure is successful by measuring the projector $\ket{0}\bra{0}$; and (2) test if the work cost is less than $W_{\mathrm{Haar}}$ by measuring the energy change in the energy source \footnote{Here, we assume that the energy source is classical and can be measured deterministically. If the energy source is quantum, we can always measure the work cost to $1/\poly(n)$ accuracy in polynomial time. Thus the work cost lower bound in the no-go result will change to $W_{\mathrm{Haar}}-\mathrm{negl}(n)$ where $\mathrm{negl}(n)$ is a function that decays faster than any polynomial of $n$.
In \Cref{sec:pseudorandom}, we provide the proof details of the fully quantum case}.
$\mathcal{D}$ outputs $0$ if both tests pass, and $1$ otherwise.

When given pseudorandom states, we have $\Pr[\mathcal{D}((\ket{\psi_x}\bra{\psi_x})^{\otimes N})=0] = \Pr[\mathcal{D}(\rho)=0]$ close to one, where the probability is over $x$ and the randomness in $\mathcal{D}$; here we used linearity of quantum channels and the guarantee that $\mathcal{E}$ can erase pseudorandom states.
When given Haar random states $\rho_{\mathrm{Haar}} = \int \mathrm{d}\ket{\psi}(\ket{\psi}\bra{\psi})^{\otimes N}$, the probability of outputting $0$ is equal to $\Pr[\mathcal{D}(\ket{\psi}\bra{\psi})=0]=\Pr[\mathcal{D}(\rho_{\mathrm{Haar}})=0]$, the probability of one-shot successful erasure of $\rho_{\mathrm{Haar}}$ with work cost less than $W_{\mathrm{Haar}}$.
Because the erasure protocol $\mathcal{E}$ is assumed to be efficient,
this probability must also be close to one by the definition of pseudorandom states.
But this contradicts Landauer's principle, which asserts that it is not possible to perform one-shot erasure with work cost less than $H_{\max}(\rho_{\mathrm{Haar}})k_B T\ln 2 = \log_2(\mathrm{rank}(\rho_{\mathrm{Haar}}))k_BT\ln2 =W_{\mathrm{Haar}}$.
This completes the proof of the no-go result.
We provide more details of the proof in \Cref{sec:pseudorandom}.

For a quantitative understanding, we note that $W_{\mathrm{Haar}}/(k_BT\ln2)\geq nN(1-(\log_2 N)/n)$, which goes to the maximal value $nN$ as $n$ grows large with $N=\poly(n)$.
But Landauer's limit for this class of states is only $\Theta(nd) = n\polylog(n)$, independent of $N$.
This means that Landauer's limit cannot be achieved in polynomial time, and asymptotically maximal work must be paid.

We remark that this is a much stronger no-go result than the third law of thermodynamics \footnote{
We consider the unattainability formulation of the third law of thermodynamics, which applies not only to the cooling of thermal states but also the erasure/purification of arbitrary states. See e.g., \cite{masanesGeneralDerivationQuantification2017}.
}, which asserts that perfect erasure as a form of cooling requires diverging resources (time, size of the heat bath, etc.) \cite{wilmingThirdLawThermodynamics2017,masanesGeneralDerivationQuantification2017,scharlauQuantumHornsLemma2018,tarantoLandauerNernstWhat2023}.
The third law does not rule out the possibility of achieving Landauer's limit in polynomial time by paying $1/\poly(n)$ extra work and using an infinite size heat bath. 
In contrast, our result shows that even with an infinite size heat bath, any procedure that can erase a particular class of low-complexity states in polynomial time must pay a nearly maximal amount of work.
This is a many-body phenomenon due to the complexity of large systems.

\section{Discussion}
We have established a concrete connection between quantum learning theory and thermodynamics.
Our results illustrate that seemingly abstract learning processes have tangible physical consequences that determine the energy cost or gain in thermodynamic tasks.
This allows us to study properties of learning itself from a thermodynamic perspective.
In particular, we find that learning has no fundamental energy cost, if implemented in a fully coherent fashion.

This connection also enables provably-efficient and energy-optimal thermodynamic protocols based on learning, improving upon existing compression-based protocols that have exponential time complexity \cite{rioThermodynamicMeaningNegative2011,pleschEfficientCompressionQuantum2010,yangEfficientQuantumCompression2016,hayashiQuantumInformationTheory2017,bennett2006universal}.
This contributes to a growing literature on state-agnostic thermodynamic and compression protocols \cite{yangCompressionQuantumShallowcircuit2025,watanabe2024black,xuereb2024resources,chakraborty2024sample}.
In particular, we circumvent the potential energy cost of measurements and revive the use of learning algorithms for thermodynamic tasks \cite{xuereb2024resources}.
In a way, learning is compression --- compressing quantum states into their classical descriptions --- and efficient learning yields efficient compression \cite{yangCompressionQuantumShallowcircuit2025}.
These efficient erasure protocols may find applications in the initialization of distributed quantum metrology systems and in recycling qubits for repeated use in quantum experiments.
They also provide a viable way to experimentally demonstrate Landauer's limit \cite{berutExperimentalVerificationLandauers2012,gaudenziQuantumLandauerErasure2018,scandiMinimallyDissipativeInformation2022} and materialize Maxwell's demon as a learning agent \cite{koskiExperimentalRealizationSzilard2014,vidrighinPhotonicMaxwellsDemon2016,camatiExperimentalRectificationEntropy2016,cottetObservingQuantumMaxwell2017} in the quantum many-body regime.
These efficient energy-optimal protocols may be a key component in building future large-scale, energy-efficient quantum computers \cite{auffevesQuantumTechnologiesNeed2022}.

Our results also exemplify the drastic change of physical resources required by certain tasks when we are in the many-body regime.
Ideas that date back to the early days of information theory, like Landauer's principle, may require reexamination via the lens of complexity \cite{munsonComplexityconstrainedQuantumThermodynamics2024,leoneEntanglementTheoryLimited2025}.

Our work opens up many interesting future directions.
First, our setting can be extended to more realistic scenarios, where we want to erase many copies of mixed states $\rho = \sum_x p_x \rho_x^{\otimes N}$.
A natural choice could be thermal states of unknown Hamiltonians that describe physical systems in thermal equilibrium.
Efficient learning and erasing may be possible in this case \cite{bakshi2024learning,chen2025learning}, where we expect the optimal work cost to scale with $N$: $\sum_x p_x NH(\rho_x) + H(\{p_x\})$, because after we have identified $\rho_x$, we still need to pay work to erase each copy because it is mixed.
Second, our results may be extended to continuous variable systems such as fermionic or bosonic systems and inspire efficient thermodynamic protocols there \cite{meleEfficientLearningQuantum2025,meleLearningQuantumStates2024}.
Third, learning may yield efficient protocols for other thermodynamic tasks beyond erasure and work extraction.

Our computational hardness results may find cryptological applications related to energy storage, manipulation, and transfer.
For example, one may envision an encrypted battery, where any computationally-bounded agent without the secret key cannot extract much work from the battery even when having access to the physical substrate of the battery; the full energy is only efficiently extractable when one has the secret key.
Intriguingly, limitations on erasing information in or extracting work from black holes may apply if the evolution of these systems is modeled accurately by pseudorandom unitaries \cite{haydenBlackHolesMirrors2007,boulandComputationalPseudorandomnessWormhole2020,akersHolographicPseudoentanglementComplexity2024}.
Finally, it may be instructive to study other physical properties of learning processes beyond energy cost, such as their consequences for information dynamics \cite{swingleMeasuringScramblingQuantum2016}, phase transitions \cite{zdeborovaStatisticalPhysicsInference2016}, and noise robustness \cite{anshuSurveyComplexityLearning2024}.

\section*{Note added}
During the revision of this manuscript, we became aware of related independent works by Watanabe and Takagi \cite{watanabe2025universal} and Lumbreras et al. \cite{lumbreras2025quantum}, released subsequent to the initial posting of our preprint.
Ref.~\cite{watanabe2025universal} addresses the information-theoretic achievability of state-agnostic work extraction when infinitely many copies are given ($N\to \infty$).
Ref.~\cite{lumbreras2025quantum} applies a known regret bound from online learning of quantum states to online work extraction from qubits ($n=1$).
Both works start from the premise that standard tomography destroys many copies of the states and uses measurements that may incur additional energy cost.
In contrast, our work shows that one can circumvent these costs by making tomography fully coherent and reversible, thereby achieving the optimal energy cost by learning.

Our work also studies the computational complexity aspect of erasure and work extraction, providing provably efficient protocols for structured states and proving no-go theorems for high-complexity states in the many-body regime ($n \gg 1$).
This aspect is not considered in \cite{watanabe2025universal,lumbreras2025quantum}.
In particular, Ref.~\cite{lumbreras2025quantum} is restricted to single-qubit pure states ($n=1$).
Ref.~\cite{watanabe2025universal} studies mixed states in general systems and has an excess energy cost per copy that scales as $\sim 2^{O(n)}/N$, which approaches zero only when the number of copies $N$ is exponential in $n$.
In fact, our no-go theorem proves that no general-purpose protocol can have runtime polynomial in $n$ while dissipating energy sublinear in $N$, as doing so would break post-quantum cryptography.
As a corollary, the $\polylog(N)$ dissipation scaling derived in \cite{lumbreras2025quantum} for qubits ($n=1$) cannot be generalized to computationally efficient protocols for many-body systems.

In addition, our work relates the energy cost of erasing quantum states to their structural properties, including circuit depth, entanglement, magic, and Boolean function degree.
This relation between different resources has not been addressed before.
We provide a detailed discussion of other related works in \Cref{sec:rel-work}.

\section{Methods}
\subsection{Proof of correctness}
Here we provide the proof sketch for the correctness of the erasure protocol $\mathcal{E}(\mathcal{L})$.
A more detailed proof can be found in \Cref{sec:correctness}.
The key is to calculate the overlap between $\ket{x}_{M'}\ket{\psi_x}^{\otimes s}_{S}\ket{0}_{M,A}$ and the intermediate state of $\mathcal{E}(\mathcal{L})$ after uncomputation (apart from $R$):
\begin{equation}
    \begin{split}
        &(\bra{x}_{M'}\bra{\psi_x}^{\otimes s}_S \bra{0}_{M,A}) \Bigl(\mathcal{L}^\dagger \sum_{x'}c_{x'|x}\ket{x'}_{M'}\ket{x'}_M\ket{\mathrm{junk}_{x'}}_{S,A}\Bigr) \\
        &=\sum_{x'}c_{x'|x}\braket{x}{x'}_{M'}\Bigl(\bra{\psi_x}^{\otimes s}_S \bra{0}_{M,A} \mathcal{L}^\dagger \Bigr) \! \left(\ket{x'}_M\ket{\mathrm{junk}_{x'}}_{S,A}\right) \\&=c_{x|x}\Bigl(\sum_{x''}c^*_{x''|x}\bra{x''}_M\bra{\mathrm{junk}_{x''}}_{S,A}\Bigr)\ket{x}_M\ket{\mathrm{junk}_x}_{S,A} \\
        &=|c_{x|x}|^2\geq p_{\mathrm{succ}},
    \end{split}
\end{equation}
which means that the trace distance between the two states is $\sqrt{1-|c_{x|x}|^4}\leq \sqrt{1-p_{\mathrm{succ}}^2}$.
Due to the data processing inequality satisfied by the trace distance \cite{wildeQuantumInformationTheory2013}, after executing the rest of the erasure protocol (a fixed quantum channel), we have $\dtr(\mathcal{E}(\mathcal{L})(\ket{\psi_x}^{\otimes N}\ket{0}), \ket{0})\leq \sqrt{1-p_{\mathrm{succ}}^2}$.
When the input is the mixture $\rho=\sum_x p_x(\ket{\psi_x}\bra{\psi_x})^{\otimes N}$, we invoke the triangle inequality and have trace distance
$\frac12\|\mathcal{E}(\mathcal{L})(\rho \otimes \ket{0}\bra{0})- \ket{0}\bra{0}\|_1\leq \sum_x p_x \dtr(\mathcal{E}(\mathcal{L})(\ket{\psi_x}^{\otimes N}\ket{0}), \ket{0})\leq \sqrt{1-p_{\mathrm{succ}}^2}$, which is close to zero.
This concludes the proof of the correctness. 

\subsection{Continuous classes of states}
It is straightforward to generalize the results to erase a continuous class of states $\mathcal{C}$.
We can use the erasure protocol for a discrete class in conjunction with a minimal $\epsilon$-covering net of $\mathcal{C}$, similar to \cite[Theorem 1]{zhaoLearningQuantumStates2024}.
The resulting erasure protocol will be able to approximately erase every copy to $\epsilon$ error in trace distance.
The work cost will be $(\log_2 m(\mathcal{C}, \dtr, \epsilon))k_B T\ln 2$ where $m(\mathcal{C}, \dtr, \epsilon)$ is the size of the covering net.
Meanwhile, this protocol can perfectly erase states drawn from a maximal $\Theta(\epsilon)$-packing net of $\mathcal{C}$.
By Landauer's principle, this requires work $(\log_2 \tilde{m}(\mathcal{C}, \dtr, \Theta(\epsilon))k_BT\ln2$ when $N$ is large, where $\tilde{m}(\mathcal{C}, \dtr, \Theta(\epsilon))$ is the size of the packing net.
Since $m$ and $\tilde{m}$ are equivalent up to a constant factor change in $\epsilon$ \cite{vershyninHighDimensionalProbabilityIntroduction2018}, the erasure protocol is asymptotically optimal.

As an example, we bound the size of the $\epsilon$-covering and packing nets of matrix product states.
Every MPS with entanglement entropy $\mathcal{S}$ can be generated by a sequential quantum circuit with $n$ gates each acting on $O(\mathcal{S})$ qubits \cite{schonSequentialGenerationEntangled2005}.
Utilizing existing covering number bounds on circuits with bounded gate complexity and fixed circuit structure \cite[Theorem 8]{zhaoLearningQuantumStates2024}, we have $\log m(\mathcal{C}, \dtr, \epsilon)\leq O(n4^{O(\mathcal{S})}\log(n4^{O(\mathcal{S})}/\epsilon))=\exp \tilde{O}(\mathcal{S})$, where $\tilde{O}$ omits $\log n, \log\log(1/\epsilon)$ factors.
To lower bound the packing number $\tilde{m}$, we note that an MPS with entanglement entropy $\mathcal{S}$ can represent any $\Theta(\mathcal{S})$-qubit state tensored with zeroes on the remaining qubits.
The set of general $\Theta(\mathcal{S})$-qubit states have log-packing number $\exp\tilde{\Omega}(\mathcal{S})$.
Thus we have $\log \tilde{m}(\mathcal{C}, \dtr, \epsilon)\geq \exp\tilde{\Omega}(\mathcal{S})$. 
This means that the work cost of erasure is $W=\mathrm{exp}(\tilde{\Theta}(\mathcal{S}))k_BT\ln 2$.
Similarly, one can bound the covering and packing numbers of depth-$d$ shallow circuit states with arbitrary two-qubit gates as $\exp\tilde{\Theta}(nd)$ \cite{zhaoLearningQuantumStates2024}, which implies the work cost of erasure $W = \tilde{\Theta}(nd)k_BT\ln 2$ when each gate is not assumed to be discrete.

\subsection{The classical case}
It helps to compare with the classical version of the problem.
Classically, the state is a mixture of $m\leq 2^n$ possible computational basis states copied $N$ times.
The entropy is $\log_2m\leq n$, which does not increase with $N$, even when the bit-strings are generated by pseudorandom functions.
We can always erase $N$ copies with work cost at most $nk_BT\ln 2$.
Therefore, it is a genuine quantum feature to have a work cost $W_{\mathrm{Haar}}$ that scales with $N$.
This originates from the fact that there are many more states when we allow superposition.
The optimal classical erasure protocol is straightforward: we condition on the first copy of the bit-string, apply transversal CNOT gates to erase the other copies, and finally erase the first copy.
From a learning perspective, this protocol learns the bitstring by looking at the first copy, and the knowledge of this one copy is sufficient for erasing the rest.
In contrast, in the fully quantum case, many copies are needed to learn a state.

\subsection{Work extraction}
The thermodynamic implications of quantum learning theory extend beyond erasure.
In the task of work extraction, it is information-theoretically possible to extract $W=(nN-H(\rho))k_BT\ln 2$ work from the state $\rho$ \cite{bennettThermodynamicsComputationReview1982,feynmanFeynmanLecturesComputation2023,alickiThermodynamicsQuantumInformational2004,faistMinimalWorkCost2015}.
When learning is efficient, our learning-to-erase protocols give efficient work extraction protocols with the maximal work yield: we simply erase the state first, and then extract work from each refreshed qubit.
In contrast, there exist ensembles such that no polynomial-time quantum algorithm can extract more work than $nNk_BT\ln 2-W_{\mathrm{Haar}}\leq N\log_2 (N)k_BT \ln 2$.
Otherwise, one could construct an efficient erasure protocol with less than $W_{\mathrm{Haar}}$ work cost for pseudorandom states by first extracting work and then erasing each qubit (which costs $nNk_BT\ln 2$ work and $O(nN)=\poly(n)$ time).
Therefore, for general states, one can extract in polynomial time only a fraction $\log_2(N)/n=O(\log (n)/n)$ of the available work, which becomes negligibly small for $n$ large.

\section*{Acknowledgments}
We thank Soonwon Choi, Bartek Czech, Jens Eisert, Philippe Faist, Yingfei Gu, Hsin-Yuan Huang, Jarrod R. McClean, Jinzhao Wang, and Yuxiang Yang for helpful discussions. 
We gratefully acknowledge support from the U.S. Department of Energy, Office of Science, National Quantum Information Science Research Centers, Quantum Systems Accelerator, and the National Science Foundation (PHY-2317110). 
The Institute for Quantum Information and Matter is an NSF Physics Frontiers Center.

% \section*{Data Availability}
% No data are generated or analyzed in this theoretical work.

% \section*{Author Contributions}
% H.Z. and Y.Z. conceived this project.
% H.Z., Y.Z., and J.P. contributed to the development of this project.
% H.Z. wrote the manuscript with inputs from all authors.

% \section*{Competing Interests}
% The authors declare no competing interests.

\bibliography{ref}

\clearpage
\onecolumngrid

\appendix
\appendixpage
\tableofcontents

\vspace{1em}

The appendices are structured as follows.
In \Cref{sec:rel-work}, we discuss how this work relates to the existing and concurrent literature.
In \Cref{sec:prelim}, we provide a formal and pedagogical introduction to the thermodynamic framework and the quantum information tools that we use in this work.
In \Cref{sec:correctness}, we give the detailed proof of the correctness of the learning-to-erase protocol in the general formulation.
In \Cref{sec:pseudorandom}, we provide the detailed proof of the computational hardness result for erasure.

\section{Related work}
\label{sec:rel-work}

\textbf{Computational complexity of thermodynamic tasks.}
The relation between energy dissipation and information processing has been known since the foundational work of Landauer \cite{landauerIrreversibilityHeatGeneration1961,landauerInformationPhysical1991} and Bennett \cite{bennettLogicalReversibilityComputation1973,bennettThermodynamicsComputationReview1982}.
It has been further formalized and generalized to quantum systems and finite-shot scenarios in a series of works \cite{reebImprovedLandauerPrinciple2014,dahlstenInadequacyNeumannEntropy2011,feynmanFeynmanLecturesComputation2023,faistMinimalWorkCost2015,rioThermodynamicMeaningNegative2011}.
These works establish information-theoretic quantities, including various notions of entropy, as the key metric that characterizes the energy dissipation in information processing tasks, among which erasure and work extraction are the canonical examples.
These connections to entropic quantities allow one to use algorithmic primitives from compression and source coding \cite{nielsenQuantumComputationQuantum2010} to build thermodynamic protocols that achieve the optimal energy cost \cite{faistMinimalWorkCost2015}.
For example, in the erasure task of erasing $N$ copies of an $n$-qubit state, the information-theoretically optimal protocol proceeds in a compress-to-erase fashion \cite{dahlstenInadequacyNeumannEntropy2011,rioThermodynamicMeaningNegative2011}: we first compress the states to a minimal number of qubits, and then erase them one by one.
However, such compression protocols, developed in the early days of quantum information, focus on the case where each copy has a constant Hilbert space dimension $D=2^n$ and have runtime polynomial in $N$ and $D$.
This means that the runtime of the resulting thermodynamic protocols grows exponentially with the system size $n$, making them prohibitively expensive for many-body systems.
We note that these earlier compression works define ``efficiency'' to be polynomial in $N, D=2^n$, which is not efficient (i.e., polynomial in $N, n$) in the modern language of quantum computation.

Only recently have people started to explore the efficiency of thermodynamic and compression tasks.
\cite{munsonComplexityconstrainedQuantumThermodynamics2024} proposed a general framework using complexity-constrained entropic quantities to characterize the energy cost when time complexity is limited.
Such quantities are hard to calculate, as they involve an optimization over all possible circuit configurations with bounded gate complexity.
As a result, it is not clear when efficiency can be achieved beyond simple examples.
In \cite{yangCompressionQuantumShallowcircuit2025}, an efficient compression protocol was developed for shallow circuit states.
Our work contributes to this growing literature that studies the computational complexity of thermodynamics, by providing provably-efficient and energy-optimal thermodynamic protocols for a wide range of physically-relevant states, and proving no-go theorems showing that inefficiency is inevitable for slightly more complex states.

\textbf{Energy cost of acquiring knowledge.}
The energy cost of erasure relates to the entropy of the state, which measures our ignorance.
This illustrates that the energy cost is fundamentally tied to our knowledge of the system.
However, entropic estimates of the energy cost do not take into account the potential cost of acquiring such knowledge.
Indeed, the compression primitives used in existing thermodynamic protocols rely on having complete knowledge of the state that we want to erase \cite{dahlstenInadequacyNeumannEntropy2011,rioThermodynamicMeaningNegative2011}.
Replacing the compression step by universal quantum compression protocols \cite{jozsa1998universal,jozsa2003universal,hayashi2002quantum,bennett2006universal} removes the need for complete knowledge.
But these protocols cause irreversible disturbance to the states, thus incurring additional energy cost. 
%It appears that such attempts to acquire knowledge of the state necessarily consumes additional copies of the states and incurs additional energy cost due to irreversible operations such as measurements.

Several recent works \cite{vsafranek2023work,xuereb2024resources,watanabe2024black,chakraborty2024sample} formalized this intuition and explicitly asked whether acquiring knowledge has a fundamental energy cost.
In particular, \cite{vsafranek2023work,watanabe2024black,chakraborty2024sample} remarked that acquiring knowledge of the state through tomography (i.e., quantum learning algorithms) requires an enormous number of additional copies and is therefore not suitable for thermodynamic tasks.
\cite{xuereb2024resources} analyzed the energy cost of acquiring knowledge by calculating the energy cost of the measurements in standard tomography algorithms and regarded that as the energy cost of gaining knowledge.
The two works \cite{watanabe2025universal,lumbreras2025quantum} released subsequent to ours start from the same premise that standard tomography destroys many copies and uses measurements that may incur an additional energy cost; they also develop work extraction protocols that do not use standard tomography.
In contrast, our work shows that learning itself has no fundamental energy cost.
We prove this by showing that every learning algorithm can be made fully reversible, an idea that dates back to Bennett \cite{bennettThermodynamicsComputationReview1982}, thereby circumventing the thermodynamic analysis of measurements. 
In fact, we show that learning-to-erase protocols can be energy optimal and provably efficient, hence reviving the use of tomography for thermodynamic tasks. 

\textbf{Quantum learning theory.}
Recent years have seen a surge in efforts formalizing the process of acquiring knowledge in a quantum world via information theory and complexity theory, creating the field of quantum learning theory \cite{anshuSurveyComplexityLearning2024}.
After realizing that general state learning is exponentially expensive \cite{haahSampleoptimalTomographyQuantum2017,odonnellEfficientQuantumTomography2015}, people have developed efficient learning algorithms for more structured and physically relevant states, including matrix product states \cite{cramerEfficientQuantumState2010}, shallow circuit states \cite{huangLearningShallowQuantum2024,landauLearningQuantumStates2024,kimLearningStatePreparation2024}, states with low gate complexity \cite{zhaoLearningQuantumStates2024} and low magic \cite{montanaroLearningStabilizerStates2017,leoneLearningTdopedStabilizer2024,grewalEfficientLearningQuantum2024}, low-degree phase states \cite{arunachalamOptimalAlgorithmsLearning2023}, etc.
Our work opens up a new avenue by applying these efficient quantum learning algorithms: we show that any efficient learning algorithm can be turned into an efficient and energy-optimal thermodynamic protocol for erasure or work extraction.
Our work also adds a new dimension to the resource analysis of learning algorithms.
Apart from time, space, and sample complexity, our work establishes energy cost as another physical property of learning algorithms that we can analyze.

Tools from quantum cryptography, in particular pseudorandom states and unitaries \cite{ji2018pseudorandom,maHowConstructRandom2024,schuster2025random}, have been extensively used in proving the hardness of learning \cite{arunachalam2021quantum,zhaoLearningQuantumStates2024,schuster2025random}.
These have also found use in proving the computational hardness of some physical tasks, including classifying phases of matter \cite{schuster2025random}, recognizing causal structures \cite{schuster2025random}, recognizing signatures of chaos \cite{guSimulatingQuantumChaos2024}, performing entanglement distillation and dilution \cite{leoneEntanglementTheoryLimited2025}, etc.
Our no-go theorem contributes to this growing understanding of fundamental hardness in physical tasks, by establishing the computational hardness of performing thermodynamic tasks.

\section{Preliminaries}
\label{sec:prelim}

\subsection{Thermodynamics}

We begin with a formal description of the standard (one-shot) Landauer erasure setting (see e.g., \cite{reebImprovedLandauerPrinciple2014,dahlstenInadequacyNeumannEntropy2011,rioThermodynamicMeaningNegative2011,faistMinimalWorkCost2015}).
We will then state the (one-shot) Landauer principle that gives a lower bound on the work cost of erasure.
After that, we will give a formal description of the standard Landauer erasure protocol and a rigorous proof that it erases a single qubit with the optimal energy cost.
Building upon this, we explain the compress-to-erase protocol that erases an arbitrary state at the optimal work cost.
It is simple and information-theoretically optimal, but its computational complexity is generally exponential.
Finally, we conclude by explaining the work extraction task, a different thermodynamic task that is dual to erasure, the work cost of which is determined by the work cost of the erasure task.

\begin{definition}[Landauer erasure \cite{reebImprovedLandauerPrinciple2014}]
\label{def:erasure}
    Let $S$ be an $n$-qubit quantum system described by the Hilbert space $\mathbb{C}^{2^n}$.
    Let $E$ be the environment described by the Hilbert space $\mathbb{C}^{2^{|E|}}$, where the size of the environment $|E|$ can be infinitely large.
    The environment is associated with a local Hamiltonian $\mathcal{H}_E$ and a temperature $T\geq 0$.
    We use $k_B$ to denote the Boltzmann constant and $\beta = 1/(k_B T)$ to denote the inverse temperature.
    Initially, the system $S$ is in a given state $\rho \in \mathbb{C}^{2^n\times 2^n}$, the environment $E$ is in the thermal Gibbs state
    \begin{equation}
        \rho_E = \frac{e^{-\beta \mathcal{H}_E}}{\tr(e^{-\beta \mathcal{H}_E})},
    \end{equation}
    and the overall state is $\rho_{SE} = \rho \otimes \rho_E$.
    The task of Landauer erasure of $\rho$ under temperature $T$ is to apply a unitary $U$ acting on both the system and the environment that transforms the initial state into the final state 
    \begin{equation}
        \rho'_{SE} = U(\rho\otimes \rho_E)U^\dagger
    \end{equation}
    such that the reduced density matrix of the system becomes $\ket{0}$: $\rho'_S = \tr_E(\rho'_{SE}) = \ket{0^n}\bra{0^n}$.
    The work cost is defined as the energy change of the environment
    \begin{equation}
    \label{def:work}
        W = \tr(\mathcal{H}_E \rho'_{E}) - \tr(\mathcal{H}_E \rho_E),
    \end{equation}
    where $\rho'_E = \tr_S(\rho'_{SE})$ is the reduced density matrix of the environment after erasure.
    An erasure protocol is specified by the tuple $(U, \mathcal{H}_E)$, which we can strategically choose to minimize the work cost $W$.
\end{definition}

Several remarks help to clarify the physical consideration in \Cref{def:erasure}.

\begin{remark}[System Hamiltonian]
    In \Cref{def:erasure}, the Hamiltonian of the system $\mathcal{H}_S$ does not need to be specified \cite{reebImprovedLandauerPrinciple2014}.
    Only the Hamiltonian of the environment $\mathcal{H}_E$ is specified.
    This is because what we care about is the irreversible energy loss caused by heat dissipation that we have to pay, rather than the internal energy change of the system (see e.g., \cite[Sec. 2.2]{reebImprovedLandauerPrinciple2014}).
    This is equivalent to choosing a completely degenerate Hamiltonian $\mathcal{H}_S=0$ for the system, as in \cite{faistMinimalWorkCost2015,rioThermodynamicMeaningNegative2011,munsonComplexityconstrainedQuantumThermodynamics2024}.
    To make statements about the actual work done, we need to take into account the system's Hamiltonian in general \cite{sagawaMinimalEnergyCost2009}, but when $\mathcal{H}_S=0$, the actual work done is the same as the heat dissipation defined in \Cref{def:work}.
    Therefore, from now on, we will take the standard convention of $\mathcal{H}_S=0$ and use the following three words interchangeably: work cost, energy cost, and heat dissipation.
\end{remark}

\begin{remark}[System state]
    The initial state of the system $\rho$ (i.e., the state to be erased) can be arbitrary \cite{reebImprovedLandauerPrinciple2014}.
    In particular, it does not need to be a thermal Gibbs state of some Hamiltonian.
    This allows the study of Landauer erasure to be applicable to a wider range of scenarios in information processing, where the states that we prepare may contain useful information and are not necessarily in thermal equilibrium.
    There is a different line of work (sometimes referred to as algorithmic cooling, or simply cooling) that focuses on the cooling of physical systems to their ground states.
   In that setting, it is often assumed that the state of the system is initially in the thermal state of a given Hamiltonian (see e.g., \cite{tarantoLandauerNernstWhat2023,schulman2005physical}).
    This assumption is natural in a cooling setting, but limits the applicability of the results in general information processing scenarios.
\end{remark}

\begin{remark}[Ancilla qubits]
    Ancilla qubits that are in the zero state $\ket{0}$ can be freely introduced and used as long as they are erased at the end (i.e., put back to $\ket{0}$) \cite{faistMinimalWorkCost2015}.
    The requirement that the final state of the ancilla qubits is the same as the initial state ensures that one cannot hide entropy or work cost in the ancilla qubits.
    
    One may be interested in a more restricted setting where we only have access to ancilla qubits in arbitrary mixed states.
    This can be reduced to the setting with perfect ancilla, because we can first perform erasure on each ancilla to some arbitrarily small error, use it as if it is a perfect $\ket{0}$ ancilla, and at the end perform work extraction to restore it back to the initial mixed state.
    This procedure incurs no addition energy cost and it is computationally efficient, since the energy cost of the initial erasure step cancels with the final work yield exactly.
    Therefore, we can assume without loss of generality that we have access to perfect ancilla qubits for free as long as they are brought back to their initial states at the end.
\end{remark}

\begin{remark}[Finite-size effect]
    Throughout this work, we ignore finite-size effects by allowing the environment to have infinite size $|E|\to \infty$.
    This is because in this work we are focusing on studying Landauer erasure in the many-body regime (i.e., when the system size $n$ is large).
    It is well studied that a finite-size heat bath necessarily introduces small corrections to Landauer's principle that are on the scale of $O(1/2^{|E|})$ and independent of $n$ \cite{reebImprovedLandauerPrinciple2014}.
    Allowing infinite-size environment eliminates these distractions and enables us to focus on the scaling with respect to $n$.
    For similar reasons, we also ignore finite-error corrections that are on the order of $O(\log(1/\epsilon))$ for some constant error $\epsilon$ \cite{rioThermodynamicMeaningNegative2011,faistMinimalWorkCost2015}.
    These finite-error effects are still under active study (often for few-body systems) and are not the focus of this work.
    We refer interested readers to a recent survey \cite{chattopadhyay2025landauer} for more details.
\end{remark}

Under the formal setting described in \Cref{def:erasure}, Landauer's principle has been rigorously established \cite{dahlstenInadequacyNeumannEntropy2011,reebImprovedLandauerPrinciple2014,faistMinimalWorkCost2015}.
We state the one-shot version that best fits into our problem setting, where we are aiming at erasing a \textit{single} copy of the total state $\rho = \sum_{x=1}^m p_x (\ketbra{\psi_x})^{\otimes N}$.

\begin{lemma}[One-shot Landauer's principle \cite{faistMinimalWorkCost2015}]
\label{lem:landauer}
    The work cost of erasing a state $\rho$ under temperature $T$ using any protocol is lower bounded by
    \begin{equation}
        W\geq H_{\max}(\rho) k_B T \ln 2,
    \end{equation}
    where $H_{\max} = \log_2 \mathrm{rank}(\rho)$ is the max-entropy of $\rho$.
\end{lemma}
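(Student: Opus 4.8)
The plan is to first recover the familiar Landauer bound with the von Neumann entropy from a free-energy argument, and then upgrade it to the one-shot quantity $H_{\max}(\rho)=\log_2\mathrm{rank}(\rho)$ using the structure of exact erasure. For the first step, I would use that the global evolution is unitary and $E$ starts in the Gibbs state: since $\rho'_S=\ket{0^n}\bra{0^n}$ is pure we have $\rho'_{SE}=\ket{0^n}\bra{0^n}\otimes\rho'_E$, so $H(\rho'_E)=H(\rho'_{SE})=H(\rho\otimes\rho_E)=H(\rho)+H(\rho_E)$. Combining this with the non-negativity of the relative entropy $D(\rho'_E\,\|\,\rho_E)\ge 0$ and the Gibbs form $\rho_E\propto e^{-\beta\mathcal{H}_E}$ — which rearranges into $W=\tr(\mathcal{H}_E\rho'_E)-\tr(\mathcal{H}_E\rho_E)\ge k_BT\ln 2\,[H(\rho'_E)-H(\rho_E)]$, i.e. the statement that $\rho_E$ minimizes the bath free energy — yields $W\ge H(\rho)\,k_BT\ln 2$. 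The bath Hamiltonian has entered only through this variational property of the thermal state.

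The substantive part is replacing $H(\rho)$ by $H_{\max}(\rho)$. The key observation is that the whole protocol induces a fixed quantum channel $\mathcal{E}$ on $S$ (trace out $E$) that sends $\rho$ exactly to $\ket{0^n}\bra{0^n}$; since $\ket{0^n}\bra{0^n}$ is an extreme point of the state space and $\rho$ has full rank on its support $V:=\mathrm{supp}(\rho)$, every $\sigma$ with $\mathrm{supp}(\sigma)\subseteq V$ satisfies $c\rho\ge\sigma\ge 0$ for some $c>0$, whence $c\ket{0^n}\bra{0^n}\ge\mathcal{E}(\sigma)\ge 0$ and therefore $\mathcal{E}(\sigma)=\ket{0^n}\bra{0^n}$. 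Thus the same $(U,\mathcal{H}_E)$ also erases the maximally mixed state $\pi_V$ on $V$, and running the free-energy argument of the previous paragraph with $\pi_V$ as the input gives work cost at least $H(\pi_V)\,k_BT\ln 2=\log_2\mathrm{rank}(\rho)\,k_BT\ln 2$. In the state-agnostic setting of this paper — where a single protocol is required to erase $\rho=\sum_x p_x(\ket{\psi_x}\bra{\psi_x})^{\otimes N}$ for every prior $\{p_x\}$, so in particular the flat state $\pi_V$ with $V=\mathrm{span}\{\ket{\psi_x}^{\otimes N}\}$ — this already gives the claimed bound; for the per-state statement with an arbitrary spectrum of $\rho$ one closes the remaining gap with the single-shot thermodynamic analysis (thermomajorization and smoothing) of \cite{faistMinimalWorkCost2015}, which we invoke.

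I expect the last jump — from the von Neumann entropy to the max-entropy — to be the main obstacle. The bare free-energy inequality is not enough: a large cold environment has a large $H(\rho_E)$, which makes the entropic subadditivity estimates vacuous, so $\log_2\mathrm{rank}(\rho)$ cannot be extracted from entropy bookkeeping alone; it has to come from the energetics, through the constraint that the low-energy sectors of the bath (whose sizes are controlled by the Gibbs weights $e^{-\beta E_k}$) cannot simultaneously absorb the $\mathrm{rank}(\rho)$ mutually orthogonal ``slots'' that exact erasure of the entire support forces into $E$. I would either package this as the cited one-shot result, or — in the worst-case reading that is all this paper needs — reduce to the flat state $\pi_V$ and apply the elementary von Neumann bound there, which is rigorous and sufficient.
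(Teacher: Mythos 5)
The paper does not prove this lemma at all --- it is imported verbatim from Faist et al.\ with a citation --- so there is no in-paper argument to match yours against. Your proposal is a genuine, and essentially correct, elementary derivation of the version of the bound that the paper actually uses. The two ingredients are sound: the relative-entropy/free-energy argument gives $W(\sigma)\ge H(\sigma)k_BT\ln 2$ for any input $\sigma$ (using purity of $\rho'_S$ to get $H(\rho'_E)=H(\sigma)+H(\rho_E)$, then $D(\rho'_E\Vert\rho_E)\ge 0$), and your positivity trick --- $\sigma\le c\rho$ on $\mathrm{supp}(\rho)$ implies $\mathcal{E}(\sigma)\le c\,\ket{0^n}\bra{0^n}$, hence $\mathcal{E}(\sigma)=\ket{0^n}\bra{0^n}$ --- correctly shows that any protocol exactly erasing $\rho$ also erases the flat state $\pi_V$ on $V=\mathrm{supp}(\rho)$, for which the von Neumann bound already equals $\log_2\mathrm{rank}(\rho)$. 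You are also right to be suspicious of the ``last jump'': what your argument bounds is $W(\pi_V)$, and since the work functional of Definition~1 is the \emph{average} energy change of the bath and is affine in the input, this does not transfer to $W(\rho)$ for a skewed spectrum. Indeed, in the average-energy accounting of Definition~1 the per-state claim is not recoverable from entropy bookkeeping (Reeb--Wolf-type achievability lets one erase a nearly pure rank-$2$ state at average cost close to $H(\rho)\ll H_{\max}(\rho)$); the $H_{\max}$ statement belongs to the single-shot, worst-case work-battery framework of the cited reference, exactly as you say. Since the paper only ever invokes the lemma for flat states ($\rho_{\mathrm{Haar}}$ on the symmetric subspace, and the rank-$m$ mixture maximized over priors), your reduction to $\pi_V$ plus the elementary bound is both rigorous and sufficient, and is arguably a cleaner justification than the bare citation the paper provides.
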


On the other hand, Landauer's principle can be achieved for a single qubit when there is an infinitely large environment \cite{rioThermodynamicMeaningNegative2011,skrzypczyk2014work}.
Below, we give a formal description of this protocol and rigorously prove that its work cost is indeed $k_BT\ln 2$, following the idea of \cite{skrzypczyk2014work}.
This formalizes the usually hand-wavy description of the standard Landauer erasure protocol: couple the system with the heat bath, slowly raise the energy level of $\ket{1}$ to infinity, decouple from the heat bath, and bring the energy level back to $0$.
For any $d$-dimensional quantum state $\rho$, a similar proof applies and shows that there is an erasure protocol with work cost $W=\log_2(d)k_BT \ln2$.

\begin{lemma}[Standard Landauer erasure]
\label{lem:standard-erasure}
    There is a protocol $(U, \mathcal{H}_E)$ that erases any single-qubit state $\rho$ under temperature $T$ with work cost
    \begin{equation}
        W = k_BT\ln 2.
    \end{equation}
\end{lemma}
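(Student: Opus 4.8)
The plan is to make rigorous the textbook ``couple--raise--decouple--lower'' recipe as an actual pair $(U,\mathcal{H}_E)$ in the sense of \Cref{def:erasure}, following \cite{skrzypczyk2014work}, and to check that its work cost equals $k_BT\ln 2$, which by \Cref{lem:landauer} is optimal for the worst-case (rank-$2$) input. \textbf{Step 1: reduce to the maximally mixed qubit.} Let $E$ contain one ``fresh'' qubit $B_0$ with $\mathcal{H}_{B_0}=0$, whose Gibbs state at every temperature is $I/2$. Begin $U$ with $\mathrm{SWAP}_{S,B_0}$; this maps $\rho\otimes(I/2)\mapsto(I/2)\otimes\rho$ and changes the energy of $E$ by $\tr(0\cdot\rho)-\tr(0\cdot I/2)=0$. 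At zero work cost we have thus replaced the arbitrary (possibly non-diagonal or even pure) input by $\rho_S=I/2$; this is also why the quoted cost is the same for every $\rho$, and why $k_BT\ln 2$ is the relevant value — it is the worst case over inputs, and a protocol tailored to a known low-rank $\rho$ could do strictly better, but the lemma only asks us to exhibit a protocol with this cost.

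\textbf{Step 2: implement a discretized reversible isotherm $I/2\mapsto\ket0\bra0$.} Fix an integer $K$ and gaps $0=\Delta_0<\Delta_1<\cdots<\Delta_K$. Into $E$ put, for each $j$, a sub-bath $\mathcal B_j$ of $L$ qubits whose single-qubit Hamiltonian has gap $\Delta_j$, each initialized in its own Gibbs state, so that the global initial state of $E$ is a tensor product of Gibbs states as \Cref{def:erasure} demands. The unitary $U$ then runs $K$ stages: in stage $j$ it (i) raises the \emph{effective} energy gap of the qubit being erased from $\Delta_{j-1}$ to $\Delta_j$ — since $\mathcal{H}_S=0$, this is realized by coherently transferring the excited-state amplitude of $S$ onto bath registers of increasing gap, all energy being bookkept inside $E$ — and (ii) partially re-equilibrates that qubit by a short sequence of swaps with fresh Gibbs qubits of $\mathcal B_j$ (a swap between equal-Hamiltonian qubits conserves $\langle\mathcal H_E\rangle$ and, over a stage, just moves heat into $\mathcal B_j$). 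Taking $K\to\infty$ with $\max_j(\Delta_j-\Delta_{j-1})\to0$, $\Delta_K\to\infty$, and $L\to\infty$, an induction over $j$ shows that the erased qubit stays close in trace distance to the instantaneous Gibbs state $\propto\ket0\bra0+e^{-\beta\Delta_j}\ket1\bra1$; after the last stage it is $O(e^{-\beta\Delta_K})+(\text{discretization})+O(1/L)$ away from $\ket0\bra0$, and a final transfer of this near-ground register back to $S$ gives $\rho'_S=\ket0\bra0$ in the limit.

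\textbf{Step 3: account the work and invoke optimality.} Along the trajectory the energy delivered to $E$ is the sum over stages of (gap increment)$\times$(excited-state population of the erased qubit); since the population tracks the Gibbs curve, in the limit this is $\int_0^\infty \frac{e^{-\beta\Delta}}{1+e^{-\beta\Delta}}\,d\Delta=\beta^{-1}\ln 2=k_BT\ln2$. Equivalently, the reversible work cost equals the free-energy difference $\Delta F=\Delta\langle\mathcal H_S\rangle-T\,\Delta S=0-T\bigl(S(\ket0\bra0)-S(I/2)\bigr)=k_BT\ln 2$. On the other hand \Cref{lem:landauer} gives $W\ge H_{\max}(I/2)\,k_BT\ln2=k_BT\ln2$ for any protocol, so the construction is optimal and $W=k_BT\ln2$. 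Replacing $B_0$ by a $d$-level register and running a $d$-level isotherm in exactly the same way yields a protocol that erases any $d$-dimensional state at cost $W=\log_2(d)\,k_BT\ln2$, as needed for the compress-to-erase protocol.

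\textbf{Main obstacle.} The difficulty is entirely in forcing a quasi-static isothermal process — morally a limit of infinitely many infinitesimal re-thermalizations with a time-dependent system Hamiltonian — into the rigid template of \Cref{def:erasure}: a single unitary, a single fixed tensor-product-of-Gibbs environment, and $\mathcal{H}_S=0$. Concretely one must (a) realize the ``effective gap'' of the erased qubit through correlations with ancillas in $E$ while guaranteeing that \emph{every} auxiliary register is returned to a near-ground, hence near-zero-energy, state — any gap-$\Delta_K$ register left with nonzero excited population would contribute $\sim\Delta_K\to\infty$ to $W$ and ruin the bound — and (b) control the discretization, the finite-$\Delta_K$ truncation, and the finite-$L$ bath error simultaneously so that the total energy change of $E$ converges to exactly $k_BT\ln 2$ rather than merely to $k_BT\ln2+o(1)$. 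These residual terms are precisely the finite-size and finite-accuracy corrections that the paper's conventions (the Remarks after \Cref{def:erasure}) set aside.
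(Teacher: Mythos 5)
Your proposal is correct and follows essentially the same route as the paper's proof (both after Skrzypczyk et al.): a staircase of environment qubits with gaps rising from $0$ to infinity in vanishing increments, each in its Gibbs state, with the work cost obtained as a Riemann sum converging to $\int_0^\infty \frac{e^{-\beta\Delta}}{1+e^{-\beta\Delta}}\,d\Delta = k_BT\ln 2$. The paper's realization is just a cleaner instantiation of your Step 2 — a single swap with one Gibbs qubit per gap level both raises the effective gap and exactly re-thermalizes the system, and the telescoping energy bookkeeping automatically handles the worry in your ``main obstacle'' paragraph, since the only register left excited is the last one and its contribution $\Delta_{|E|}e^{-\beta\Delta_{|E|}}/(1+e^{-\beta\Delta_{|E|}})$ vanishes in the limit.
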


\begin{proof}[Proof of \Cref{lem:standard-erasure}]
    We take an environment that has $|E|$ qubits and label them as qubit $1, \ldots, |E|$.
    For each qubit $i=1, \ldots, |E|$, we assign a single-qubit Hamiltonian
    \begin{equation}
        \mathcal{H}_i = \Delta_i\ketbra{1} = \frac{i-1}{\sqrt{|E|}}\ketbra{1}.
    \end{equation}
    In other words, for qubit $i$, the state $\ket{0}$ has energy $0$ and the state $\ket{1}$ has energy $\Delta_i=(i-1)/\sqrt{|E|}$ that goes to infinity as $|E|\to\infty$.
    The Hamiltonian of the environment is
    \begin{equation}
        \mathcal{H}_E = \sum_{i=1}^{|E|} \mathcal{H}_i.
    \end{equation}
    Initially, the state of the environment at inverse temperature $\beta=1/{k_BT}$ is the thermal Gibbs state $\rho_E = \bigotimes_{i=1}^{|E|}\rho_{i}$ where
    \begin{equation}
        \rho_{i} = \frac{e^{-\beta \mathcal{H}_i}}{\tr(e^{-\beta \mathcal{H}_i})} = \frac{1}{1+e^{-\beta\Delta_i}}\ketbra{0} + \frac{e^{-\beta \Delta_i}}{1+e^{-\beta\Delta_i}}\ketbra{1}.
    \end{equation}
    The initial joint state can be written as
    \begin{equation}
        \underbrace{\rho}_{\text{system}} \otimes \underbrace{\rho_1\otimes \rho_2 \otimes \cdots \otimes \rho_{|E|-1}\otimes  \rho_{|E|}}_{\text{environment qubit $1, \ldots, |E|$}}.
    \end{equation}
    The unitary that we execute is as follows.
    We swap the system qubit with environment qubit $1$, and then swap the system qubit with $2$, and then swap the system qubit with $3$, and so on, until we swap the system qubit with $|E|$.
    The resulting state is
    \begin{equation}
        \underbrace{\rho_{|E|}}_{\text{system}} \otimes \underbrace{\rho\otimes \rho_1 \otimes \cdots \otimes \rho_{|E-2|} \otimes\rho_{|E-1|}}_{\text{environment qubit $1, \ldots, |E|$}}.
    \end{equation}
    To see that this is a valid erasure protocol, we bound the trace distance between $\ket{0}$ the final system state $\rho_{|E|}$:
    \begin{equation}
        \dtr(\rho_{|E|}, \ketbra{0}) = \frac{1}{2}\left\|\frac{-e^{-\beta \Delta_{|E|}}}{1+e^{-\beta \Delta_{|E|}}}\ketbra{0} + \frac{e^{-\beta \Delta_{|E|}}}{1+e^{-\beta \Delta_{|E|}}}\ketbra{1}\right\|_1\leq e^{-\beta \left(\sqrt{|E|}-1/\sqrt{|E|}\right)} \to 0
    \end{equation}
    as $|E|\to \infty$.
    To calculate the work cost, we note that the first environment qubit does not have energy change since it has a degenerate Hamiltonian.
    Taking into account all the other qubits, we have
    \begin{equation}
    \begin{split}
        W &= \sum_{i=2}^{|E|}\tr(\mathcal{H}_i \rho_{i-1}) - \sum_{i=1}^{|E|}\tr(\mathcal{H}_i \rho_{i}) \\
        &= \sum_{i=2}^{|E|}\tr(\mathcal{H}_i \rho_{i-1}) - \sum_{i=2}^{|E|}\tr(\mathcal{H}_{i-1} \rho_{i-1}) - \tr(\mathcal{H}_{|E|} \rho_{|E|})\\
        &=\sum_{i=2}^{|E|} (\Delta_i-\Delta_{i-1})\frac{e^{-\beta \Delta_{i-1}}}{1+e^{-\beta \Delta_{i-1}}} - \Delta_{|E|}\frac{e^{-\beta \Delta_{|E|}}}{1+e^{-\beta \Delta_{|E|}}} \\
        &\to \int_{0}^{\infty} \mathrm{d}\Delta \frac{e^{-\beta \Delta}}{1+e^{-\beta \Delta}} - 0 \\
        &=\frac{1}{\beta}\ln 2 = k_BT\ln 2
    \end{split}
    \end{equation}
    as $|E|\to \infty$.
    This completes the proof of \Cref{lem:standard-erasure}.
\end{proof}

\begin{remark}[Computational complexity]
    We remark that in this standard single-qubit erasure protocol, the gate complexity of $U$ is already infinitely large (it grows as $O(|E|)\to \infty$).
    This is necessary and in accordance with the third law of thermodynamics \cite{reebImprovedLandauerPrinciple2014,tarantoLandauerNernstWhat2023}.
    That said, as we mentioned above, we are primarily interested in the many-body regime (i.e., the scaling behavior of the work cost and the computational complexity of erasure with respect to the system size $n$).
    Therefore, we regard the single-qubit Landauer erasure as an elementary step that only contributes unit computational complexity, similar to a two-qubit gate.
    More formally, we define the \textit{computational complexity} of an erasure protocol as the minimal number of two-qubit gates and single-qubit Landauer erasures that implements the protocol.
    This way of calculating computational complexity was also used in \cite{munsonComplexityconstrainedQuantumThermodynamics2024}.
\end{remark}

\begin{remark}[Unknown pure states]
    Erasure of a known pure state does not cost energy because the entropy of a pure state is zero.
    In this work, we consider the erasure of many copies of an unknown pure state.
    That means the state we need to erase is a mixed state of the form $\rho = \sum_{x=1}^m p_x (\ket{\psi_x}\bra{\psi_x})^{\otimes N}$, where $\{p_x\}$ are the prior probabilities. 
    It is this lack of knowledge that gives rise to the energy cost of erasure.
\end{remark}

For a general $n$-qubit state $\rho$, the erasure protocol that achieves the optimal work cost $W=H_{\max}(\rho)k_B T\ln 2$ is well established \cite{dahlstenInadequacyNeumannEntropy2011}, which we call compress-to-erase protocols.

\begin{lemma}[Compress-to-erase protocol \cite{dahlstenInadequacyNeumannEntropy2011}]
\label{lem:compress-to-erase}
    Let $\rho$ be an $n$-qubit state with max-entropy $H_{\max}(\rho)$.
    There is a protocol that erases $\rho$ under temperature $T$ with work cost
    \begin{equation}
        W = H_{\max}(\rho)k_BT\ln 2.
    \end{equation}
\end{lemma}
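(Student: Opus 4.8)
The plan is to realize $\rho$ inside a subspace of dimension $\mathrm{rank}(\rho)$ by a unitary that, because $\mathcal{H}_S=0$, costs no work, and then to erase that subspace using the $d$-dimensional generalization of \Cref{lem:standard-erasure} flagged in the remark following it. First I would diagonalize $\rho = \sum_{i=1}^{r}\lambda_i\ketbra{e_i}$ with $r=\mathrm{rank}(\rho)$, so that $H_{\max}(\rho)=\log_2 r$ and $\mathrm{supp}(\rho)=\mathrm{span}\{\ket{e_i}\}_{i=1}^{r}$. Pick any system unitary $V$ on $\mathbb{C}^{2^n}$ that extends the partial isometry $\ket{e_i}\mapsto\ket{i-1}$, $i=1,\dots,r$, so that $\rho':=V\rho V^\dagger$ is supported entirely on the fixed $r$-dimensional subspace $\mathcal{K}=\mathrm{span}\{\ket{0},\ldots,\ket{r-1}\}$. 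Since $\mathcal{H}_S=0$, applying $V$ produces no energy change and contributes nothing to the work cost; \Cref{def:erasure} constrains only the work cost, so an arbitrarily complex $V$ is permitted here.

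Next I would erase the $r$-dimensional register $\mathcal{K}$ by the natural generalization of the proof of \Cref{lem:standard-erasure}. Take an environment consisting of $|E|$ copies of an $r$-level system; in copy $i$ the level $\ket{0}$ has energy $0$ and each of $\ket{1},\dots,\ket{r-1}$ has energy $\Delta_i=(i-1)/\sqrt{|E|}$, so its Gibbs state puts weight $(1+(r-1)e^{-\beta\Delta_i})^{-1}$ on $\ket{0}$. Swapping $\mathcal{K}$ successively through copies $1,2,\ldots,|E|$ leaves it, as $|E|\to\infty$, exponentially close in trace distance to $\ket{0}$ — the bound is $\tfrac{(r-1)e^{-\beta\Delta_{|E|}}}{1+(r-1)e^{-\beta\Delta_{|E|}}}\le (r-1)e^{-\beta(\sqrt{|E|}-1/\sqrt{|E|})}\to 0$ for fixed $r\le 2^n$ — hence the full system ends in $\ket{0^n}$. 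The work cost is the telescoping sum
\begin{equation}
    W=\sum_{i=2}^{|E|}(\Delta_i-\Delta_{i-1})\frac{(r-1)e^{-\beta\Delta_{i-1}}}{1+(r-1)e^{-\beta\Delta_{i-1}}}-\Delta_{|E|}\frac{(r-1)e^{-\beta\Delta_{|E|}}}{1+(r-1)e^{-\beta\Delta_{|E|}}},
\end{equation}
which converges to $\int_0^\infty \mathrm{d}\Delta\,\tfrac{(r-1)e^{-\beta\Delta}}{1+(r-1)e^{-\beta\Delta}}=\beta^{-1}\ln r=(\log_2 r)k_BT\ln 2=H_{\max}(\rho)k_BT\ln 2$. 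The overall protocol is the unitary $U=U_{\mathrm{Landauer}}^{(r)}\,(V\otimes I_E)$ with the ladder Hamiltonian above, and all of its work cost comes from the second factor.

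The only genuine computation is verifying this $d=r$ generalization of \Cref{lem:standard-erasure}: the error bound on $\dtr(\rho_{|E|},\ketbra{0})$ and the convergence of the telescoping sum to $\beta^{-1}\ln r$. Both are routine modifications of the $d=2$ case already carried out — the single change is that the Boltzmann weight on the excited manifold carries the degeneracy factor $(r-1)$, which is exactly what upgrades $\ln 2$ to $\ln r$ under the integral (substitute $u=(r-1)e^{-\beta\Delta}$). I do not anticipate a real obstacle: the lemma asserts existence and information-theoretic optimality (it matches the lower bound of \Cref{lem:landauer}), not efficiency, so the fact that $V$ may require a number of gates exponential in $n$ is immaterial here.
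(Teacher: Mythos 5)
Your proposal is correct, and it follows the same compress-to-erase skeleton as the paper's proof: a work-free unitary rotation of $\mathrm{supp}(\rho)$ onto a fixed subspace followed by a Landauer erasure of that subspace. The one genuine difference is in the erasure step. The paper compresses to $H_{\max}(\rho)$ qubits and invokes \Cref{lem:standard-erasure} qubit by qubit, which is why it must assume $H_{\max}(\rho)$ is an integer "for simplicity" and defer the general case; you instead erase the $r$-dimensional support in one shot via the $d$-level generalization of the swap-ladder construction (the one the paper only flags in a remark after \Cref{lem:standard-erasure} without using it here). Your route buys the non-power-of-two rank case directly — the degeneracy factor $(r-1)$ in the Boltzmann weight cleanly upgrades $\ln 2$ to $\ln r$ under the integral, exactly as you compute — at the price of re-verifying the trace-distance bound and the telescoping sum for $d=r$ rather than citing the qubit lemma as a black box. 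One small point worth making explicit if you write this up: the "swap" between the $r$-dimensional subspace $\mathcal{K}\subset\mathbb{C}^{2^n}$ and each $r$-level bath system should be defined as the swap on $\mathcal{K}\otimes\mathbb{C}^r$ extended by the identity on $(\mathbb{C}^{2^n}\ominus\mathcal{K})\otimes\mathbb{C}^r$; this is a valid unitary and acts as intended because the post-$V$ system state is supported in $\mathcal{K}$. Your closing observation that efficiency of $V$ is irrelevant to the lemma matches the paper's own remark on the computational complexity of compress-to-erase.
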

\begin{proof}[Proof of \Cref{lem:compress-to-erase}]
    For simplicity, we assume that $H_{\max}(\rho)$ is an integer.
    The proof extends to the general situation straightforwardly.
    The idea of this compress-to-erase protocol is very simple: first compress the $n$-qubit state $\rho$ to $H_{\max}(\rho)$ qubits, and then erase these qubits.
    More concretely, from the definition of max-entropy, we have
    \begin{equation}
        \mathrm{rank}(\rho) = 2^{H_{\max}(\rho)}.
    \end{equation}
    This means that the state $\rho$ has an eigen-decomposition
    \begin{equation}
        \rho = \sum_{i=1}^{2^{H_{\max}(\rho)}} \lambda_i \ketbra{\psi_i}
    \end{equation}
    with $2^{H_{\max}(\rho)}$ orthonormal eigenvectors $\ket{\psi_i}$.
    This defines a unitary $U$ satisfying
    \begin{equation}
        U\ket{\psi_i} = \ket{\mathrm{bin}(i)}\ket{0^{n-H_{\max}(\rho)}}, \quad \forall i=1, \ldots, 2^{H_{\max}(\rho)},
    \end{equation}
    where $\mathrm{bin}(i)$ is the length-$H_{\max}(\rho)$ binary representation of $i$.
    After applying this unitary $U$, the state is compressed to $H_{\max}(\rho)$ qubits with the rest qubits already erased, and we apply the standard single-qubit erasure protocol for each of these qubits.
    \Cref{lem:standard-erasure} implies that the work cost of this erasure protocol is
    \begin{equation}
        W = H_{\max}(\rho)k_B T\ln 2.
    \end{equation}
    This completes the proof of \Cref{lem:compress-to-erase}.
\end{proof}

\begin{remark}[Computational complexity of compress-to-erase]
    We note that the simple protocol of compress-to-erase in \Cref{lem:compress-to-erase} is already information-theoretic optimal (i.e., it achieves the optimal work cost $W=H_{\max}(\rho)k_BT\ln 2$).
    However, its computational complexity may be huge.
    Typically, building the unitary $U$ given the description of $\rho$ already takes exponentially long time.
    Moreover, the resulting unitary $U$ is also likely to be exponentially complex.
    In fact, our computational hardness result in \Cref{sec:pseudorandom} proves that such general-purpose compress-to-erase protocols is bound to have exponential computational complexity because of the existence of pseudorandom states.
\end{remark}

Lastly, a different thermodynamic task that is closely related to erasure is work extraction.
In a work extraction task with an $n$-qubit initial state $\rho$, the goal is to maximize the work yield no matter what state the system ends up being in.
It is closely related to erasure because any erasure protocol can be used to build a work extraction protocol and vice versa.
This is because, similar to \Cref{lem:standard-erasure}, one can show that with a single-qubit pure state (say $\ket{0}$), there is a work extraction protocol that transforms it into a maximally mixed state and produces work $k_BT\ln 2$.
Therefore, given an erasure protocol with work cost $W_{\mathrm{erase}}(\rho)$, we can use it to erase the state $\rho$ and then apply the single-qubit work extraction protocol to each of the qubits.
The total work yield is $W_{\mathrm{extract}}(\rho)=nk_B T\ln 2-W_{\mathrm{erase}}(\rho)$.
On the other hand, given any work extraction protocol with work yield $W_{\mathrm{extract}}(\rho)$, we can use it to transform the state $\rho$ into the maximally mixed state and then apply the single-qubit erasure protocol to each of the qubits.
The total work cost is $W_{\mathrm{erase}}(\rho)  =nk_B T\ln 2-W_{\mathrm{extract}}(\rho)$.
Using \Cref{lem:standard-erasure,lem:compress-to-erase}, we see that the optimal work yield from an $n$-qubit state $\rho$ is
\begin{equation}
    W = (n-H_{\max}(\rho))k_B T.
\end{equation}
This shows that the two tasks are dual to each other.
In particular, the results that we prove for erasure directly carry over to work extraction.

\subsection{Quantum Learning Theory}
\label{sec:learning}

Now we move on to give a formal definition of a quantum state learning algorithm that we will use in this work.
We refer interested readers to a recent survey \cite{anshuSurveyComplexityLearning2024} for more discussion.

\begin{definition}[Quantum state learning algorithm]
\label{def:learning}
    Let $\mathcal{C}$ be a set of $n$-qubit quantum states.
    Let $\epsilon\in [0, 1]$ be the learning accuracy and $\delta\in [0, 1]$ be the failure probability.
    Let $s$ be a positive integer.
    A quantum state learning algorithm $\mathcal{L}$ for the class of states $\mathcal{C}$ with sample complexity $s$, accuracy $\epsilon$, and failure probability $\delta$ is a quantum algorithm that takes $s$ copies of a state $\rho\in \mathcal{C}$ and outputs a circuit description of a state $\hat{\rho}$ such that
    \begin{equation}
        \dtr(\rho, \hat{\rho})\leq \epsilon
    \end{equation}
    with probability at least $1-\delta$ for all $\rho\in \mathcal{C}$.
\end{definition}

\begin{remark}[Circuit description as the output]
    Here, we require quantum state learning algorithms to output circuit descriptions of the learned quantum state, rather than any other classical description of the quantum state.
    This is crucial in defining the computational complexity of a quantum learning algorithm.
    Otherwise one can hide a huge amount of computational complexity in the translation between different classical descriptions of the same quantum state.
    To see this, consider the scenario where the states we want to learn are all ground states of different Hamiltonians.
    If the learning algorithm only outputs the Hamiltonian of the learned state, which is a valid classical description, figuring out the circuit description and preparing the state may still take exponential time.
    Requiring the learner to output the circuit descriptions ensures that the output can indeed be used to prepare copies of the unknown state.
    See \cite[Appendix B]{zhaoLearningQuantumStates2024} for more discussion on this choice.
    We note that the learning algorithms we use in this work (i.e., for shallow circuit states, $t$-doped stabilizer states, matrix product states, and low-degree phase states) all fits into \Cref{def:learning}.
\end{remark}

\begin{remark}[Efficient learning ensures efficient state (un)preparation]
    Under \Cref{def:learning}, if there is a polynomial-time quantum learning algorithm for a class of states $\mathcal{C}$, then every state in this class must have a polynomial-time quantum circuit that prepares the state up to $\epsilon$ error.
    This circuit is the output of the learning algorithm.
    Otherwise, writing down the learning outcome would take super-polynomial time and the learning algorithm cannot be efficient.
    Moreover, given the learned circuit description, one can easily construct the circuit that unprepares the state: simply reverse the order of the gates in the learned circuit and take the complex conjugates of them.
\end{remark}

\subsection{Quantum Cryptography}

Here, we review pseudorandom states, the cryptographic object that we will use to prove computational hardness results for erasure.
Informally speaking, pseudorandom states are ensembles of quantum states that cannot be distinguished from Haar random states by any polynomial-time quantum algorithm.
This notion has been of great interest in quantum cryptography, complexity theory, and learning theory.

\begin{definition}[Pseudorandom quantum states (PRS) \cite{ji2018pseudorandom}]
    \label{def:PRS}
  Let $n$ denote the security parameter. 
  Let $\mathcal{K} = \{\mathcal{K}_n\}_{n \in \mathbb{N}}$ be the key space. 
  A keyed family of pure quantum states $\left\{\left|\phi_k\right\rangle\right\}_{k \in \mathcal{K}_n}$ is \emph{pseudorandom against $t(n)$ adversaries} if the following two conditions hold:
\begin{enumerate}
    \item (Efficient generation). There is a polynomial-time quantum algorithm $\mathsf{Gen}$ that generates state $\left|\phi_k\right\rangle$ on input $k$. That is, for all $n \in \mathbb{N}$ and for all $k \in \mathcal{K}_n, \mathsf{Gen}(1^n, k)=\left|\phi_k\right\rangle$.
    \item (Pseudorandomness). Any polynomial number of copies of $\left|\phi_k\right\rangle$ with the same random $k \in \mathcal{K}_n$ are computationally indistinguishable from the same number of copies of a Haar-random state. More precisely, for any $t(n)$-time quantum algorithm $\mathcal{D}$ and any copy number $N = \poly(n)$, there exists a negligible function $\mathrm{negl}(\cdot)$ (i.e., a function that grows more slowly than any inverse polynomial of its argument) such that for all $n \in \mathbb{N}$,
    \begin{equation}
        \left|\Pr_{k \leftarrow \mathcal{K}_n}\left[\mathcal{D}\left(\left|\phi_k\right\rangle^{\otimes N}\right)=1\right]-\Pr_{|\psi\rangle \leftarrow \mu}\left[\mathcal{D}\left(|\psi\rangle^{\otimes N}\right)=1\right] \right|\leq\operatorname{negl}(n),
    \end{equation}
    where $\mu$ is the Haar measure over pure states on $n$ qubits.
\end{enumerate}
When $t(n)=\poly(n)$, we simply say that the states are \emph{pseudorandom}.
\end{definition}

Intuitively, one can think of $\mathcal{D}$ as a distinguisher trying to tell pseudorandom states from Haar random states.
Its output value is binary: 0 means that it thinks the state is a PRS and 1 stands for a Haar random state.
Then the above definition states that the probability it thinks the state is Haar random is roughly the same no matter whether the input is pseudorandom or Haar random.
This formalizes the notion of indistinguishably.

In a recent breakthrough \cite{schuster2025random}, it has been shown that pseudorandom states can be constructed with extremely low depth (and therefore low magic, entanglement entropy, etc.), assuming the standard cryptographic conjecture that learning with error problems (LWE) are sub-exponentially hard \cite{regevLatticesLearningErrors2009}.

\begin{lemma}[Low-depth pseudorandom states {\cite[Corollary 2]{schuster2025random}}]
\label{lem:low-depth-prs}
    Under the conjecture that no subexponential-time quantum algorithm can solve LWE, pseudorandom states over $n$ qubits can be formed using one-dimensional circuits of depth $\polylog(n)$.
\end{lemma}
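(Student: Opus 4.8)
The plan is to derive \Cref{lem:low-depth-prs} by combining post-quantum pseudorandom functions with a one-dimensional, depth-$\polylog(n)$ circuit architecture whose truly-random instantiation already produces an approximate state design, and then derandomizing. First I would invoke standard post-quantum cryptography: subexponential hardness of LWE implies post-quantum one-way functions and hence quantum-secure pseudorandom functions $\{f_k\}_{k\in\mathcal{K}_n}$ with $\poly(n)$ seed length, computable classically and therefore coherently evaluable. (The subexponential rather than merely polynomial assumption is what lets the quantitative losses below --- compressing PRF evaluation into $\polylog(n)$ depth, and the reduction in the low-depth design step --- still leave a negligible security gap.)

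Next comes the technical core: exhibiting a one-dimensional architecture $\mathcal{A}_n$ on $n$ qubits of depth $d(n)=\polylog(n)$, built from local Haar-random gates acting on $\polylog(n)$-size patches arranged in a hierarchical, recursive ``gluing'' pattern rather than a strictly nearest-neighbour brickwork, such that for a truly random draw $U\sim\mathcal{A}_n$ the ensemble $\{U\ket{0^n}\}$ is an $\varepsilon$-approximate state $t$-design with $t=\poly(n)$ and $\varepsilon=2^{-\polylog(n)}$. The argument is a gluing lemma: since the depth is $\polylog(n)$, the backward light cone of each qubit meets only $\polylog(n)$ patches, so moments of the output state of degree at most $t$ depend on only a bounded number of patches; a hybrid argument that replaces the gate in each patch, one at a time, by a genuinely Haar-random gate --- bounding each step by the design error of a random circuit confined to a single light cone --- shows the global output moments match the Haar moments up to an accumulated error that stays $2^{-\polylog(n)}$, since each of the $O(n/\polylog(n))$ hybrids contributes only a $2^{-\polylog(n)}$ term. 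For $t\geq N$ this makes $\E_{U\sim\mathcal{A}_n}[(U\ket{0^n}\bra{0^n}U^\dagger)^{\otimes N}]$ trace-norm close to $\E_{\ket{\psi}\leftarrow\mu}[(\ket{\psi}\bra{\psi})^{\otimes N}]$, i.e.\ statistically indistinguishable from Haar on any $N=\poly(n)$ copies.

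Then I would derandomize: let $\mathsf{Gen}(1^n,k)$ be the polynomial-time quantum algorithm that uses the seed $k$ to coherently evaluate $f_k$ on the short ($O(\log n)$-bit) labels of the gate locations of $\mathcal{A}_n$ and apply the corresponding gates; since the labels are short and $\mathcal{A}_n$ has depth $\polylog(n)$, one can arrange $\mathsf{Gen}$ to run in depth $\polylog(n)$ (using a suitably low-depth PRF, or by precomputing the $\polylog(n)$-many gate descriptions it needs). A standard hybrid then finishes the proof: if some $\poly(n)$-time $\mathcal{D}$ distinguished $\ket{\phi_k}^{\otimes N}$ from a Haar state with non-negligible advantage, then --- because the truly-random instantiation of $\mathcal{A}_n$ is statistically close to Haar by the previous step --- $\mathcal{D}$ would distinguish the PRF-seeded circuit from the true-random circuit, hence (composing with the efficient circuit-building step) distinguish $f_k$ from a random function, contradicting PRF security. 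Feeding $\mathcal{A}_n$ into the role of $\mathsf{Gen}$ in \Cref{def:PRS} yields a PRS of depth $\polylog(n)$.

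The main obstacle is the design step: beating the light-cone obstruction. A depth-$d$ strictly nearest-neighbour one-dimensional circuit entangles each qubit with only $O(d)$ others and cannot be a good design when $d=\polylog(n)$, so the architecture must use $\polylog(n)$-size patches with internal long-range (or all-to-all) gates in a recursive gluing pattern, and the delicate part is the quantitative bookkeeping --- showing that the per-patch design error times the number of patch-replacement hybrids stays $2^{-\polylog(n)}$ uniformly over all degree-$\poly(n)$ moments. Matching this error budget against a subexponential-time adversary (so that $2^{-\polylog(n)}$ errors are safely negligible and the PRF reduction loses nothing) is what forces the stated LWE assumption; the remaining cryptographic hybridization is routine.
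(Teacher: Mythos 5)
First, a remark on scope: the paper does not prove \Cref{lem:low-depth-prs} at all --- it is imported verbatim as Corollary 2 of the cited work of Schuster et al., so your proposal is an attempt to reconstruct a deep external result rather than something the paper itself establishes. On the merits, your high-level strategy (glue randomness on $\polylog(n)$-sized patches; pay for the small per-patch security parameter with subexponential LWE) matches the cited construction, but your technical core contains a genuine gap. You assert that a genuinely $\polylog(n)$-depth one-dimensional circuit with \emph{truly random} local gates is an $\varepsilon$-approximate state $t$-design for $t=\poly(n)$ with $\varepsilon=2^{-\polylog(n)}$, i.e.\ statistically indistinguishable from Haar on $N=\poly(n)$ copies. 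This is not what the cited work proves, and it is not supported by known design results: the circuit depth required to form even additive-error $k$-designs grows polynomially with $k$, so matching $\poly(n)$ moments statistically would force $\poly(n)$ depth. The actual argument sidesteps this. Its ideal intermediate object is a constant number of layers of \emph{Haar-random unitaries on overlapping $\omega(\log n)$-qubit patches} --- which is not a low-depth circuit at all, since a Haar unitary on a $\polylog(n)$-qubit patch admits no $\polylog(n)$-depth implementation --- and the gluing lemma shows that this object is $\poly(n)$-copy indistinguishable from global Haar with error $\poly(N)/2^{\Omega(\xi)}$ for patch size $\xi$. The $\polylog(n)$-depth circuit is obtained only afterwards, by substituting a patch-level \emph{pseudorandom unitary} (built from LWE) for each patch-Haar unitary; from that point the guarantee is purely computational, inherited from PRU security at security parameter $\xi=\polylog(n)$, which is exactly where the subexponential hardness assumption is consumed.

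Your derandomization step inherits the same problem. You propose to seed the gates by evaluating a PRF on short gate labels and to argue security by hybridizing against truly random seeds. But the truly-random-seed hybrid yields gates drawn from whatever efficiently sampleable distribution your seeding defines --- not Haar-random unitaries on the patches --- and no PRF-security reduction can bridge the superpolynomial-circuit-size gap between an efficiently sampled patch gate and a Haar patch unitary. The correct reduction replaces patch PRUs by patch Haar unitaries using PRU (not PRF) indistinguishability, and only then invokes the statistical gluing lemma for the all-Haar hybrid. So both load-bearing steps of your outline --- the statistical $\poly(n)$-design claim for a truly low-depth random circuit, and the PRF-based derandomization --- would need to be replaced by the patch-PRU-plus-gluing argument of the cited work.
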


\section{Correctness of the learning-to-erase protocol}
\label{sec:correctness}

In this section, we give a detailed proof for the correctness of the learning-to-erase protocol in the general formulation.
In the whole protocol, there are in total five registers:
\begin{enumerate}
    \item the sample register $S$, $ns$ qubits that stores $s$ copies of the $n$-qubit state $\ket{\psi_x}$ that we will feed into the learning algorithm; 
    \item the remaining-copies register $R$, $n(N-s)$ qubits that stores the remaining $N-s$ copies of the state $\ket{\psi_x}$;
    \item the memory register $M$, $\log_2 m$ memory qubits that stores the learning outcome $x'\in [m]$;
    \item the ancilla register $A$, ancilla qubits that we will use as work space of the learning algorithm; and
    \item the additional memory register $M'$, additional $\log_2 m$ memory qubits that are used to store the learning outcome during the uncomputation.
\end{enumerate}
Registers $M, A, M'$ are initialized in the zero state $\ket{0}_M\ket{0}_A\ket{0}_M'$ while registers $S, R$ are initialized in the state $\rho=\sum_{x=1}^m p_x (\ketbra{\psi_x})^{\otimes N}$.
In the following, we will first consider the case where the initial state is $\ket{\psi_x}^{\otimes N}$ for any $x$, and then add in the probability mixture $\{p_x\}$.

We have seen in the main text that we can build a reversible learning algorithm $\mathcal{L}$ (i.e., a unitary) that satisfies
\begin{equation}
\label{eq:learning-alg}
    \mathcal{L}\ket{\psi_x}^{\otimes s}_S\ket{0}_M\ket{0}_A=\sum_{x'=1}^m c_{x'|x}\ket{x'}_M\ket{\mathrm{junk}_{x'}}_{S,A},
\end{equation}
where the coefficients $c_{x'|x} \in\mathbb{C}$ are determined by the probability of predicting $x'$ given the state $\ket{\psi_x}$.
In particular, $|c_{x|x}|^2\geq p_{\mathrm{succ}}$ is close to one for any $x$.

We first give a detailed recap on how we build the erasure protocol $\mathcal{E}(\mathcal{L})$ using the reversible learning algorithm $\mathcal{L}$.
The erasure protocol can be decomposed in five steps (see \Cref{fig:overview}(a)):
\begin{itemize}
    \item Step 1 (learn), where we execute the learning algorithm $\mathcal{L}$;
    \item Step 2 (copy), where we copy out the learning outcome;
    \item Step 3 (uncompute), where we uncompute the learning algorithm by executing $\mathcal{L}^\dagger$;
    \item Step 4 (unprepare), where we apply the state unpreparation unitary on all the copies stored in $S, R$; and 
    \item Step 5 (erase learning outcome), where we erase the $M'$ register that records the learning outcome.
\end{itemize}
The initial state of the system is
\begin{equation}
    \ket{\phi_0^x}=\ket{\psi_x}^{\otimes s}_S \ket{\psi_x}^{\otimes (N-s)}_R \ket{0}_{M}\ket{0}_A \ket{0}_{M'}.
\end{equation}

In Step 1 (learn), we run the learning algorithm $\mathcal{L}$ on the sample register $S$, the memory register $M$, and the ancilla work space $A$.
This gives us the following state
\begin{equation}
    \ket{\phi_1^x}=\ket{0}_{M'}\ket{\psi_x}^{\otimes (N-s)}_R \left(\mathcal{L}\ket{\psi_x}^{\otimes s}_S\ket{0}_{M}\ket{0}_A\right) = \ket{0}_{M'}\ket{\psi_x}^{\otimes (N-s)}_R \left(\sum_{x'=1}^m c_{x'|x}\ket{x'}_M\ket{\mathrm{junk}_{x'}}_{S,A} \right),
\end{equation}
where we have used the property of the learning algorithm.

In Step 2 (copy), we apply CNOT gates on $M'$ controlled by $M$ (i.e., apply CNOT on the $i$-th qubit of $M'$ controlled by the $i$-th qubit of $M$).
In this way, we copy the content of $M$ and write it into $M'$.
This is because the action of the CNOT gate satisfies
\begin{equation}
    \mathrm{CNOT}\ket{a}\ket{0} = \mathrm{CNOT}\ket{a}\ket{a}, \quad \forall a\in \{0, 1\}.
\end{equation}
Therefore, the state after Step 2 is
\begin{equation}
    \ket{\phi_2^x}=\ket{\psi_x}^{\otimes (N-s)}_R \left(\sum_{x'=1}^m c_{x'|x}\ket{x'}_{M'}\ket{x'}_M\ket{\mathrm{junk}_{x'}}_{S,A} \right).
\end{equation}

In Step 3 (uncompute), we uncompute the learning algorithm by applying $\mathcal{L}^\dagger$ to the sample register $S$, the memory register $M$, and the ancilla work space $A$.
The resulting state is
\begin{equation}
    \ket{\phi_3^x} = \ket{\psi_x}^{\otimes (N-s)}_R \mathcal{L}^\dagger \left(\sum_{x'=1}^m c_{x'|x}\ket{x'}_{M'}\ket{x'}_M\ket{\mathrm{junk}_{x'}}_{S,A} \right) = \ket{\psi_x}^{\otimes (N-s)}_R \ket{\eta_x}_{S, A, M, M'},
\end{equation}
where we have defined the state 
\begin{equation}
    \ket{\eta_x}_{S, A, M, M'}=\mathcal{L}^\dagger \left(\sum_{x'=1}^m c_{x'|x}\ket{x'}_{M'}\ket{x'}_M\ket{\mathrm{junk}_{x'}}_{S,A} \right).
\end{equation}
Now we show that $\ket{\eta_x}_{S, A, M, M'}$ is close to the following ideal state (the state when the learning algorithm is perfect $c_{x|x}=1$)
\begin{equation}
    \ket{\tilde{\eta}_x}_{S, A, M, M'} = \ket{\psi_x}^{\otimes s}_S \ket{0}_A \ket{0}_M \ket{x}_{M'}.
\end{equation}
To this end, we calculate the overlap between $\ket{\tilde{\eta}_x}_{S, A, M, M'}$ and $\ket{\eta_x}_{S, A, M, M'}$:
\begin{equation}
\begin{split}
    \braket{\tilde{\eta}_x}{\eta_x}_{S, A, M, M'} &= \left(\bra{x}_{M'}\bra{\psi_x}^{\otimes s}_S \bra{0}_{M,A} \right) \left(\mathcal{L}^\dagger \sum_{x'}c_{x'|x}\ket{x'}_{M'}\ket{x'}_M\ket{\mathrm{junk}_{x'}}_{S,A}\right) \\
    &=\sum_{x'}c_{x'|x}\braket{x}{x'}_{M'}\left(\bra{\psi_x}^{\otimes s}_S \bra{0}_{M,A} \mathcal{L}^\dagger \right) \left(\ket{x'}_M\ket{\mathrm{junk}_{x'}}_{S,A}\right) \\
    &= \sum_{x'}c_{x'|x}\braket{x}{x'}_{M'}\left(\mathcal{L}\ket{\psi_x}^{\otimes s}_S \ket{0}_{M,A}  \right)^\dagger \left(\ket{x'}_M\ket{\mathrm{junk}_{x'}}_{S,A}\right)\\
    &=c_{x|x}\left(\sum_{x''}c_{x''|x}\ket{x''}_M\ket{\mathrm{junk}_{x''}}_{S,A}\right)^\dagger\ket{x}_M\ket{\mathrm{junk}_x}_{S,A} \\
    &=c_{x|x}\left(\sum_{x''}c^*_{x''|x}\bra{x''}_M\bra{\mathrm{junk}_{x''}}_{S,A}\right)\ket{x}_M\ket{\mathrm{junk}_x}_{S,A} \\
    &=c_{x|x}\sum_{x''}c^*_{x''|x}\braket{x''}{x}_M\bra{\mathrm{junk}_{x''}}_{S,A}\ket{\mathrm{junk}_x}_{S,A} \\
    &=|c_{x|x}|^2\geq p_{\mathrm{succ}},
\end{split} 
\end{equation}
where we have used the action of $\mathcal{L}$ \Cref{eq:learning-alg}, the basis $\ket{x}$ being orthonormal, and the success probability guarantee $|c_{x|x}|^2\geq p_{\mathrm{succ}}, \forall x$.
This implies that the trace distance between $\ket{\eta_x}$ and $\ket{\tilde{\eta}_x}$ is small:
\begin{equation}
    \dtr(\ket{\eta_x}, \ket{\tilde{\eta}_x}) = \sqrt{1-|\braket{\eta_x}{\tilde{\eta}_x}|^2} = \sqrt{1-|c_{x|x}|^4}\leq \sqrt{1-p_{\mathrm{succ}}^2}.
\end{equation}
Therefore, at the end of Step 3 (uncompute), the state $\ket{\phi^x_3} = \ket{\psi_x}^{\otimes (N-s)}_R \ket{\eta_x}_{S, A, M, M'}$ is close to the following ideal state
\begin{equation}
    \ket{\tilde{\phi}_3^{x}} = \ket{\psi_x}^{\otimes (N-s)}_R \ket{\tilde{\eta}_x}_{S, A, M, M'} = \ket{\psi_x}^{\otimes (N-s)}_R\ket{\psi_x}^{\otimes s}_S\ket{0}_A\ket{0}_M \ket{x}_{M'}.
\end{equation}
In particular, we have
\begin{equation}
    \dtr(\ket{\phi^x_3}, \ket{\tilde{\phi}^x_3}) = \dtr(\ket{\eta_x}, \ket{\tilde{\eta}_x})\leq \sqrt{1-p_{\mathrm{succ}}^2}.
\end{equation}
From now on, we will keep track of both $\ket{\phi^x_3}$ and $\ket{\tilde{\phi}^x_3}$.

In Step 4 (unprepare), we apply the state unpreparation unitaries to all the copies $S, R$ controlled by the learning outcome in $M'$.
More formally, we apply a unitary $U_{\mathrm{unprep}}$ that satisfies
\begin{equation}
    U_{\mathrm{unprep}} \ket{x}_{M'} \ket{\psi_x}^{\otimes N}_{S, R} = \ket{x}_{M'}\ket{0}_{S, R}.
\end{equation}
The resulting state is
\begin{equation}
    \ket{\phi^x_4} = U_{\mathrm{unprep}}\ket{\phi^x_3}
\end{equation}
and 
\begin{equation}
    \ket{\tilde{\phi}^x_4} = U_{\mathrm{unprep}}\ket{\tilde{\phi}^x_3} = U_{\mathrm{unprep}}\ket{\psi_x}^{\otimes (N-s)}_R\ket{\psi_x}^{\otimes s}_S\ket{0}_A\ket{0}_M \ket{x}_{M'} = \ket{0}_{S,R}\ket{0}_A\ket{0}_M \ket{x}_{M'}.
\end{equation}

Finally, in Step 5 (erase learning outcome), we erase the $M'$ register that records the learning outcome using \Cref{lem:standard-erasure} on each qubit of $M'$.
This amounts to applying the erasure channel
\begin{equation}
    \mathcal{M}: \rho_{M'} \to \tr_{M'}(\rho) \ketbra{0}_{M'}, \quad \forall \rho.
\end{equation}
The resulting state is
\begin{equation}
    \rho_5^x = \mathcal{M}(\ketbra{\phi^x_4}) = \mathcal{M}(U_{\mathrm{unprep}}\ketbra{\phi^x_3}U_{\mathrm{unprep}}^\dagger)
\end{equation}
and
\begin{equation}
    \ket{\tilde{\phi}^x_5} = \ket{0}_{S, R}\ket{0}_A\ket{0}_M\ket{0}_{M'}.
\end{equation}
This means that in the ideal case the state is indeed in the all zero state $\ket{\tilde{\phi}^5_x}$.
The distance from the ideal case is bounded by
\begin{equation}
    \dtr(\rho^x_5, \ketbra{0}) = \dtr(\rho^x_5, \ketbra{\tilde{\phi}^x_5}) \leq \dtr(\ket{\phi_3^x}, \ket{\tilde{\phi}^x_3}) \leq \sqrt{1-p_{\mathrm{succ}}^2},
\end{equation}
where we have used the data processing inequality of trace distance that states the trace distance cannot increase under the channel $\mathcal{M}(U_{\mathrm{unprep}}(\cdot)U_{\mathrm{unprep}}^\dagger)$.
This proves that when the initial state is $\ket{\psi_x}^{\otimes N}$, the erasure protocol $\mathcal{E}(\mathcal{L})$ indeed erases the state.

When the initial state is $\rho = \sum_x p_x (\ketbra{\psi_x})^{\otimes N}$, we have
\begin{equation}
\begin{split}
    \dtr(\mathcal{E}(\mathcal{L})[\rho], \ketbra{0}) & = \dtr\left(\mathcal{E}(\mathcal{L})\left[\sum_x p_x (\ketbra{\psi_x})^{\otimes N}\right], \ketbra{0}\right)\\
    &= \dtr\left(\sum_x p_x \mathcal{E}(\mathcal{L})\left[(\ketbra{\psi_x})^{\otimes N}\right], \ketbra{0}\right) \\
    &= \frac{1}{2}\left\|\sum_x p_x \mathcal{E}(\mathcal{L})\left[(\ketbra{\psi_x})^{\otimes N}\right]-\ketbra{0}\right\|_1 \\
    &=\frac{1}{2}\left\|\sum_x p_x \left(\mathcal{E}(\mathcal{L})\left[(\ketbra{\psi_x})^{\otimes N}\right]-\ketbra{0}\right)\right\|_1 \\
    &\leq \sum_x p_x\frac{1}{2}\left\|\mathcal{E}(\mathcal{L})\left[(\ketbra{\psi_x})^{\otimes N}\right]-\ketbra{0}\right\|_1 \\
    &=\sum_x p_x \dtr(\rho^x_5, \ketbra{0}) \\
    &\leq \sqrt{1-p_{\mathrm{succ}}^2},
\end{split}
\end{equation}
where we have used the linearity of quantum channels, the definition of trace distance, $\sum_x p_x=1$, the triangle inequality of $\|\cdot\|_1$, and $\dtr(\rho^x_5, \ketbra{0})\leq \sqrt{1-p_{\mathrm{succ}}^2}$.
Hence, we have $\dtr(\mathcal{E}(\mathcal{L})[\rho], \ketbra{0})\to 0$ as $p_{\mathrm{succ}}\to 1$ as desired.
Meanwhile, the only part that costs energy is the erasure of learning outcome, where we used \Cref{lem:standard-erasure} to erase $\log_2 m$ qubits.
\Cref{lem:standard-erasure} then implies that the work cost of the learning-to-erase protocol $\mathcal{E}(\mathcal{L})$ is
\begin{equation}
    W = \log_2(m) k_BT\ln 2
\end{equation}
as claimed.
This completes the proof for the correctness of the learning-to-erase protocol.

\section{Computational hardness of erasure}
\label{sec:pseudorandom}

In this section, we give a detailed proof for the computational hardness of erasing pseudorandom states.
This implies the computational hardness of erasing general classes of states as long as the class includes pseudorandom states (e.g., the class of shallow circuit states with depth $d=\polylog(n)$).

In particular, as described in the main text, we prove that any polynomial-time quantum algorithm that erases $N=\poly(n)$ copies of pseudorandom states $\mathcal{C}=\{\ket{\psi_x}\}_{x=1}^m$ must have an energy cost that is at least
\begin{equation}
    W_{\mathrm{PRS}} \geq W_{\mathrm{Haar}} - \mathrm{negl}(n)
\end{equation}
where
\begin{equation}
    W_{\mathrm{Haar}}= \log_2\binom{N+2^n-1}{N}k_BT\ln 2
\end{equation}
is the work cost for erasing $N$ copies of Haar random states and $\mathrm{negl}(n)$ denotes a function that decays faster than any polynomial of $n$.

This is dramatically different from the prediction of Landauer's principle, which says that the energy cost of erasing pseudorandom states (information-theoretically, i.e., without constraints on computational complexity) is
\begin{equation}
    W_{\mathrm{PRS}}^{\mathrm{info}} = \log_2(m) k_B T \ln 2 = O(nd)k_BT\ln 2 = O(n\polylog(n))k_BT\ln 2,
\end{equation}
independent of $N$.
Here, we have used the work cost formula for shallow circuit states with depth $d$ and \Cref{lem:low-depth-prs} which states that pseudorandom states can be implemented in $d=\polylog(n)$ depth.
The difference between $W_{\mathrm{PRS}}\approx W_{\mathrm{Haar}}$ that scales with $N$ and the prediction of Landauer's principle $W_{\mathrm{PRS}}^{\mathrm{info}}$ that is independent of $N$ manifests the drastic change of fundamental physical laws when we put realistic computational complexity constraints on what we are allowed to do.

In the following, we follow the main text and give a detailed proof of this result by contradiction.
Recall that in \Cref{def:PRS}, we see that for any polynomial-time quantum algorithm $\mathcal{D}$, the distinguishability of pseudorandom states from Haar random states is negligible:
\begin{equation}
\label{eq:prs-indist}
    \left|\Pr_{\ket{\psi_x} \leftarrow \mathcal{C}}\left[\mathcal{D}\left(\left|\psi_x\right\rangle^{\otimes N}\right)=1\right]-\Pr_{|\psi\rangle \leftarrow \mu}\left[\mathcal{D}\left(|\psi\rangle^{\otimes N}\right)=1\right] \right|\leq\operatorname{negl}(n),
\end{equation}
where $\mu$ is the Haar measure over pure states on $n$ qubits.
Intuitively, one can think of $\mathcal{D}$ as a distinguisher trying to tell pseudorandom states from Haar random states.
Its output value is binary: 0 means that it thinks the state is a PRS and 1 stands for a Haar random state.
Then \Cref{eq:prs-indist} states that the probability it thinks the state is Haar random is roughly the same no matter whether the input is PRS or Haar random.

To prove the result, we assume for the sake of contradiction that there exists a polynomial-time erasure protocol $\mathcal{E}$ that erases PRS with work cost
\begin{equation}
    W_{\mathcal{E}} < W_{\mathrm{Haar}} - 1/q(n)
\end{equation}
for some polynomial function $q(n)$.
Now we construct a distinguisher $\mathcal{D}$ based on the erasure protocol $\mathcal{E}$ that violates \Cref{eq:prs-indist}.

The distinguisher is the following.
We first execute $\mathcal{E}$ on the given states and conduct two tests
\begin{enumerate}
    \item determine if the erasure is successful by measuring the projector $\ketbra{0}$; and
    \item test if the work cost is less than $W_{\mathrm{Haar}}$ by measuring the energy change in the heat bath.
\end{enumerate}
If both tests pass, $\mathcal{D}$ outputs $0$ (i.e., we think the state is PRS).
Otherwise, we think the state is Haar random and output $1$.

In this detailed proof we consider the heat bath to be quantum in general and the energy measurement is also quantum and thus stochastic.
But the proof extends straightforwardly when we assume the energy source is classical in which case the energy measurement is deterministic and easier.
More concretely, in the second test, we note that the energy expectation value of the heat bath before and after the erasure can be determined to $\epsilon$ error with failure probability at most $1/\poly(n)$ in time $\poly(n, 1/\epsilon)$ (e.g., via quantum phase estimation or simply measure each of the $\poly(n)$ terms in the Hamiltonian \cite{nielsenQuantumComputationQuantum2010}).
Here, we consider the heat bath to have $\poly(n)$ size because we can only change the state of a $\poly(n)$-size system when we run the $\poly(n)$-time erasure protocol, and that is the only part of the heat bath for which we need to measure the energy.
In particular, we choose $\epsilon=1/(2q(n))$ and conduct the energy measurement to obtain an $\epsilon$-estimate of the work cost $\hat{W}$ that satisfies
\begin{equation}
    |\hat{W}-W_{\mathcal{E}}|\leq \epsilon=\frac{1}{2q(n)}
\end{equation}
with probability at least $1-1/\poly(n)$.
This only takes $\poly(n)$ more copies of the input states.
The time needed for this energy measurement is $\poly(n, 1/\epsilon)=\poly(n, 2q(n))=\poly(n)$ because $q(n)$ is a polynomial of $n$.
The second test is considered passed if $\hat{W}$ is smaller than $W_{\mathrm{Haar}}$.

Now we analyze the performance of this distinguisher.
When the input state is a PRS $\ket{\psi_x}^{\otimes N}, \ket{\psi_x}\in \mathcal{C}$, the erasure protocol $\mathcal{E}$ guarantees that the state is erased to $\ket{0}$.
Hence the first test is guaranteed to pass.
For the second test, we note that by assumption the work cost of $\mathcal{E}$ is
\begin{equation}
    W_{\mathcal{E}} < W_{\mathrm{Haar}} - 1/q(n).
\end{equation}
This means that the work cost estimate is
\begin{equation}
    \hat{W} \leq W_{\mathcal{E}}+\frac{1}{2q(n)}<W_{\mathrm{Haar}} - \frac{1}{q(n)} + \frac{1}{2q(n)} = W_{\mathrm{Haar}} - \frac{1}{2q(n)} < W_{\mathrm{Haar}}
\end{equation}
with probability at least $1-1/\poly(n)$.
Therefore, when the input state is indeed PRS, both tests pass and the distinguisher $\mathcal{D}$ outputs $0$ with probability at least $1-1/\poly(n)$.
This implies that
\begin{equation}
    \Pr_{\ket{\psi_x} \leftarrow \mathcal{C}}\left[\mathcal{D}\left(\left|\psi_x\right\rangle^{\otimes N}\right)=1\right] = 1- \Pr_{\ket{\psi_x} \leftarrow \mathcal{C}}\left[\mathcal{D}\left(\left|\psi_x\right\rangle^{\otimes N}\right)=0\right]<1/\poly(n).
\end{equation}

On the other hand, when the input states are Haar random $\ket{\psi}^{\otimes N}, \ket{\psi}\leftarrow \mu$, we note that Landauer's principle (i.e., \Cref{lem:landauer}) asserts that the energy cost of erasure must be at least $W_{\mathrm{Haar}}$.
Meanwhile, the event that the distinguisher $\mathcal{D}$ outputs $0$ (i.e., both tests pass) by construction is the event that Haar random states are erased successfully with energy cost less than $W_{\mathrm{Haar}}$, which is forbidden by Landauer's principle.
This means that
\begin{equation}
    \Pr_{\ket{\psi_x} \leftarrow \mathcal{C}}\left[\mathcal{D}\left(\left|\psi_x\right\rangle^{\otimes N}\right)=0\right]=0
\end{equation}
and hence
\begin{equation}
    \Pr_{\ket{\psi_x} \leftarrow \mathcal{C}}\left[\mathcal{D}\left(\left|\psi_x\right\rangle^{\otimes N}\right)=1\right]=1-\Pr_{\ket{\psi_x} \leftarrow \mathcal{C}}\left[\mathcal{D}\left(\left|\psi_x\right\rangle^{\otimes N}\right)=0\right]=1.
\end{equation}
Therefore, we have
\begin{equation}
\begin{split}
    \left|\Pr_{\ket{\psi_x} \leftarrow \mathcal{C}}\left[\mathcal{D}\left(\left|\psi_x\right\rangle^{\otimes N}\right)=1\right]-\Pr_{|\psi\rangle \leftarrow \mu}\left[\mathcal{D}\left(|\psi\rangle^{\otimes N}\right)=1\right] \right| &\geq \Pr_{|\psi\rangle \leftarrow \mu}\left[\mathcal{D}\left(|\psi\rangle^{\otimes N}\right)=1\right] - \Pr_{\ket{\psi_x} \leftarrow \mathcal{C}}\left[\mathcal{D}\left(\left|\psi_x\right\rangle^{\otimes N}\right)=1\right] \\
    &>1-1/\poly(n)
\end{split}
\end{equation}
close to one, contradicting \Cref{eq:prs-indist}.
This means that the there is no polynomial-time erasure protocol that erases PRS with work cost 
\begin{equation}
    W_{\mathcal{E}} < W_{\mathrm{Haar}} - 1/q(n)
\end{equation}
for any polynomial function $q(n)$.
Thus any polynomial-time quantum algorithm that erases $N=\poly(n)$ copies of pseudorandom states $\mathcal{C}=\{\ket{\psi_x}\}_{x=1}^m$ must have an energy cost that is at least
\begin{equation}
    W_{\mathrm{PRS}} \geq W_{\mathrm{Haar}} - \mathrm{negl}(n).
\end{equation}
This concludes the proof of the computational hardness of erasing PRS.

\end{document}